\let\doendproof\endproof
\renewcommand\endproof{~\hfill$\qed$\doendproof}
\begin{document}
%	\title{Packing Trees into 1-planar Graphs\thanks{This work started at the Bertinoro Workshop on Graph Drawing 2019 and it is partially supported by: $(i)$ MIUR, under grant 20174LF3T8 ``AHeAD: efficient Algorithms for HArnessing networked Data'', $(ii)$ Dipartimento di Ingegneria - Universit\`a degli Studi di Perugia, under grants RICBASE2017WD and RICBA18WD, $(iii)$ NFS grants CCF-1740858, CCF-1712119, DMS-1839274, DMS-1839307.}}
	\title{Packing Trees into 1-planar Graphs\thanks{This work started at the Bertinoro Workshop on Graph Drawing 2019 and it is partially supported by: $(i)$ MIUR grant 20174LF3T8, $(ii)$ Dipartimento di Ingegneria - Universit\`a degli Studi di Perugia grants RICBASE2017WD and RICBA18WD, $(iii)$ NFS grants CCF-1740858, CCF-1712119, DMS-1839274, DMS-1839307.}}

\author{
	Felice De Luca \inst{1}\texorpdfstring{ \href{https://orcid.org/0000-0001-5937-7636}{\protect\includegraphics[scale=0.45]{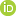}}}{},
	Emilio Di Giacomo \inst{2}\texorpdfstring{ \href{https://orcid.org/0000-0002-9794-1928}{\protect\includegraphics[scale=0.45]{orcid.png}}}{},
	Seok-Hee Hong \inst{3}\texorpdfstring{ \href{https://orcid.org/0000-0003-1698-3868}{\protect\includegraphics[scale=0.45]{orcid.png}}}{},\\ %?????????? ORCID
	Stephen Kobourov \inst{1}\texorpdfstring{ \href{https://orcid.org/0000-0002-0477-2724}{\protect\includegraphics[scale=0.45]{orcid.png}}}{},
	William Lenhart \inst{4}\texorpdfstring{ \href{https://orcid.org/0000-0002-8618-2444}{\protect\includegraphics[scale=0.45]{orcid.png}}}{},
	Giuseppe Liotta \inst{2}\texorpdfstring{ \href{https://orcid.org/0000-0002-2886-9694}{\protect\includegraphics[scale=0.45]{orcid.png}}}{},\\
	Henk Meijer \inst{5}, %?????????? ORCID
	Alessandra Tappini \inst{2}\texorpdfstring{ \href{https://orcid.org/0000-0001-9192-2067}{\protect\includegraphics[scale=0.45]{orcid.png}}}{}\href{alessandra.tappini@studenti.unipg.it}{\textsuperscript{(\Letter)}},
	Stephen Wismath \inst{6}\texorpdfstring{ \href{https://orcid.org/0000-0002-9632-3247}{\protect\includegraphics[scale=0.45]{orcid.png}}}{}
}
	
	\date{}
	
	\institute{
        Department of Computer Science, University of Arizona, US\\
%        \email{felicedeluca@email.arizona.edu},
%        \email{kobourov@cs.arizona.edu}
        \and
		Dipartimento di Ingegneria, Universit\`a degli Studi di Perugia, Italy\\
%        \email{emilio.digiacomo@unipg.it},
%		\email{giuseppe.liotta@unipg.it},
%		\email{alessandra.tappini@studenti.unipg.it}
		\and
		School of Computer Science, University of Sydney, Australia\\
%		\email{seokhee.hong@sydney.edu.au}
        \and
        Department of Computer Science, Williams College, US\\
%        \email{wlenhart@williams.edu}
        \and
        Department of Computer Science, University College Roosevelt, The Netherlands\\
%        \email{h.meijer@ucr.nl}
        \and
        Department of Computer Science, University of Lethbridge, Canada\\
%        \email{wismath@uleth.ca}
        }
%\institute{
%	Department of Computer Science, University of Arizona, Tucson, US\\
%	\email{felicedeluca@email.arizona.edu},
%	\email{kobourov@cs.arizona.edu}
%	\and
%	Dipartimento di Ingegneria, Universit\`a degli Studi di Perugia, Perugia, Italy\\
%	\email{emilio.digiacomo@unipg.it},
%	\email{giuseppe.liotta@unipg.it},
%	\email{alessandra.tappini@studenti.unipg.it}
%	\and
%	School of Computer Science, University of Sydney, Sydney, Australia\\
%	\email{seokhee.hong@sydney.edu.au}
%	\and
%	Department of Computer Science, Williams College, Williamstown, US\\
%	\email{wlenhart@williams.edu}
%	\and
%	Department of Computer Science, University College Roosevelt, Middelburg, The Netherlands\\
%	\email{h.meijer@ucr.nl}
%	\and
%	Department of Computer Science, University of Lethbridge, Lethbridge, Canada\\
%	\email{wismath@uleth.ca}
%}
	
	\maketitle
	
%	\linenumbers
%	\pagestyle{plain}

\begin{abstract}
	We introduce and study the \emph{1-planar packing problem}: Given $k$ graphs with $n$ vertices $G_1, \dots, G_k$, find a 1-planar graph that contains the given graphs as edge-disjoint spanning subgraphs. We mainly focus on the case when each $G_i$ is a tree and $k=3$. We prove that a triple consisting of three caterpillars or of two caterpillars and a path may not admit a 1-planar packing, while two paths and a special type of caterpillar always have one. We then study 1-planar packings with few crossings and prove that three paths (resp. cycles) admit a 1-planar packing with at most seven (resp. fourteen) crossings. We finally show that a quadruple consisting of three paths and a perfect matching with $n \geq 12$ vertices admits a 1-planar packing, while such a packing does not exist if $n \leq 10$.
\end{abstract}

\section{Introduction}\label{se:introduction}

In the \emph{graph packing problem} we are given a collection of $n$-vertex graphs $G_1,\dots,$ $G_k$ and we are requested to find a graph $G$ that contains the given graphs as edge-disjoint spanning subgraphs. Various settings of the problem can be defined depending on the type of graphs that have to be packed and on the restrictions put on the packing graph $G$. The most general case is when $G$ is the complete graph on $n$ vertices and there is no restriction on the input graphs. Sauer and Spencer~\cite{DBLP:journals/jct/SauerS78} prove that any two graphs with at most $n-2$ edges can be packed into $K_n$; Wo\'zniak and Wojda~\cite{DBLP:journals/gc/WozniakW93} give sufficient conditions for the existence of a packing of three graphs. 
The setting when $G$ is $K_n$ and each $G_i$ is a tree ($i=1,2,\dots,k$) has been intensively studied. Hedetniemi et al.~\cite{hhs-npttk-81} show that two non-star trees can always be packed into $K_n$. Notice that, the hypothesis that the trees are not stars is necessary for the existence of the packing because each vertex must have degree at least one in each tree, which is not possible if a vertex is adjacent to every other vertex as it is the case for a star. Wang and Sauer~\cite{WANG1993137} give sufficient conditions for the existence of a packing of three trees into $K_n$, while Mah\'eo et al.~\cite{DBLP:journals/ejc/MaheoSW96} characterize the triples of trees that admit~such~a~packing.\\
%The case in which $K_n$ is a geometric graph has also been studied~\cite{AICHHOLZER201735}.
%
\indent Garc\'ia et al.~\cite{DBLP:journals/jgt/GarciaHHNT02} consider the \emph{planar packing problem}, that is the case when the graph $G$ is required to be planar. They conjecture that the result of Hedetniemi et al. extends to this setting, i.e.,  that every pair of non-star trees can be packed into a planar graph. Notice that, when $G$ is required to be planar, two is the maximum number of trees that can be packed (because three trees have more than $3n-6$ edges). Garc\'ia et al. prove their conjecture for some restricted cases, namely when one of the trees is a path and when the two trees are isomorphic. In a series of subsequent papers the conjecture has been proved true for other pairs of trees. Oda and Ota~\cite{oo-tpptt-06} prove it when one tree is a caterpillar or it is a spider of diameter four. Frati et al.~\cite{DBLP:journals/ipl/FratiGK09} extend the last result to any spider, while Frati~\cite{DBLP:conf/cccg/Frati09} considers the case when both trees have diameter four. Geyer et al. show that a planar packing always exists for a pair of binary trees~\cite{DBLP:conf/wads/GeyerHKKT13} and for a pair of non-star trees~\cite{DBLP:journals/jocg/GeyerHKKT17}, thus finally settling the conjecture.\\
\indent In the present paper we initiate the study of the \emph{1-planar packing problem}, i.e., the problem of packing a set of graphs into a 1-planar graph. A 1-planar graph is a graph that can be drawn so that each edge has at most one crossing. 1-planar graphs have been introduced by Ringel~\cite{Ringel1965107} and have received increasing attention in the last years in the research area called \emph{beyond planarity} (see, e.g.,~\cite{KOBOUROV201749,DBLP:journals/csur/DidimoLM19}). Since any two non-star trees admit a planar packing, a natural question is whether we can pack more than two trees into a 1-planar graph. On the other hand, since each 1-planar graph has at most $4n-8$ edges edges~\cite{DBLP:journals/combinatorica/PachT97}, it is not possible to pack more than three trees into a 1-planar graph. Thus, our main question is whether any three trees with maximum vertex degree $n-3$ admit a 1-planar packing. The restriction to trees of degree at most $n-3$ is necessary because a vertex of degree larger than $n-3$ in one tree cannot have degree at least one in the other two trees. Our results can be listed as follows.~
\begin{itemize}
\item We show that there exist triples of structurally simple trees that do not admit a 1-planar packing (Section~\ref{se:counterexamples}). These triples consist of three caterpillars
with at least $10$ vertices
and of two caterpillars and a path
with $7$ vertices.
\item Motivated by the above results, we study triples consisting of two paths and a caterpillar (Section~\ref{se:2-paths+cater}). We characterize the triples consisting of two paths and a 5-legged caterpillar (a caterpillar where each vertex of the spine has no leaves attached or it has at least five) that admit such a packing. We also characterize the triples that admit a 1-planar packing and that consist of two paths and a caterpillar whose spine has exactly two vertices.
\item The packing technique of the results above is constructive and it gives rise to 1-plane graphs (i.e., 1-planar embedded graphs) with a linear number of crossings. This naturally raises the question about the number of edge crossings required by a 1-planar packing. We show that any three paths with at least six vertices can be packed into a 1-plane graph with seven edge crossings in total (Section~\ref{se:3-paths}). We also extend this technique to three cycles obtaining 1-plane graphs with fourteen crossings in total. 
\item We finally consider the 1-planar packing problem for quadruples of acyclic graphs (Section~\ref{se:quadruples}). Since, as already observed, four paths cannot be packed into a 1-planar graph, we consider three paths and a perfect matching. We show that when $n\geq 12$ such a quadruple admits a 1-planar packing and that when $n \leq 10$ a 1-planar packing does not exist.
\end{itemize}
Preliminary definitions are given in Section~\ref{se:preliminaries} and open problems are listed in Section~\ref{se:open}.
Some proofs are sketched or removed and can be found in the appendix.

%The rest of the paper is organized as follows. Preliminary definitions are given in Section~\ref{se:preliminaries}. Instances for which a 1-planar packing does not exist are described in Section~\ref{se:counterexamples}. Section~\ref{se:2-paths+cater} contains results about packing two paths and a caterpillar, while Section~\ref{se:3-paths} describes 1-planar packings with few crossings. Results about three paths and a perfect matching are presented in Section~\ref{se:quadruples}. Some open problems are listed in Section~\ref{se:open}. For space reasons, some proofs are sketched or moved to the appendix.

\section{Preliminaries}\label{se:preliminaries}

Given a graph $G$ and a vertex $v$ of $G$, $\deg_{G}(v)$ denotes the vertex degree of~$v$~in~$G$.
Let $G_1, \dots, G_k$ be $k$ graphs with $n$ vertices; a \emph{packing} of $G_1, \dots, G_k$ is an $n$-vertex graph $G$ that has $G_1,\dots, G_k$ as edge-disjoint spanning subgraphs. We consider the case when $G$ is a $1$-planar graph; in this case we say that $G$ is a \emph{1-planar packing} of $G_1,\dots, G_k$. If $G_1,\dots, G_k$ admit a (1-planar) packing $G$, we also say that $G_1,\dots, G_k$ \emph{can be packed into $G$}.   
We mainly concentrate~on~the~case when each $G_i$ is a tree ($1\leq i\leq k$). In this case (and generally when each~$G_i$ is connected), we have restrictions on the values of $k$ and $n$ for which a packing~exists.

\begin{restatable}{property}{one}\label{pr:one}
A 1-planar packing of $k$ connected $n$-vertex graphs $G_1,\dots, G_k$ exists only if $k \leq 3$ and $n \geq 2k$. Moreover, $\deg_{G_i}(v) \leq n-k$ for each vertex $v$. 
\end{restatable}
A \emph{caterpillar} $T$ is a tree such that removing all the leaves results in a path called the \emph{spine}. A \emph{backbone} of $T$ is a path $v_0,v_1,v_2,\dots,v_k,v_{k+1}$ of $T$ where $v_1,v_2,\dots,v_k$ is the spine of $T$ and $v_0$ and $v_{k+1}$ are two leaves adjacent in $T$ to $v_1$ and $v_k$, respectively. $T$ is \emph{$h$-legged} if every vertex of its spine has degree either $2$ or at least $h+2$ in $T$.

\section{Trees That Do Not Admit 1-planar Packings}\label{se:counterexamples}

In this section we describe triples of trees that do not admit a 1-planar packing.  

\begin{theorem}\label{th:no-3-trees}
For every $n \geq 10$, there exists a triple of caterpillars that does not admit a 1-planar packing.
\end{theorem}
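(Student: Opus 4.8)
The plan is to exhibit, for each $n\ge 10$, three explicit caterpillars and to show that any hypothetical $1$-planar packing of them would contain a forbidden subgraph. Concretely, I would take $T_1=T_2=T_3=T$, where $T$ is the caterpillar whose spine is the single edge $uw$, with $n-4$ leaves attached to $u$ and $2$ leaves attached to $w$; then $\deg_T(u)=n-3$, which is exactly the maximum degree permitted by Property~\ref{pr:one} for $k=3$. The whole argument rests on the observation that this maximum-degree vertex is forced to become \emph{universal} in any packing.

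First I would pin down the degrees in a putative packing $G$. Let $u_i$ denote the image in $G$ of the vertex $u$ coming from $T_i$. Since each $T_j$ is a \emph{spanning} subgraph, every vertex has degree at least $1$ in each $T_j$, so $\deg_G(u_i)\ge \deg_{T_i}(u_i)+1+1=(n-3)+2=n-1$. As $G$ is a simple graph on $n$ vertices, this forces $\deg_G(u_i)=n-1$; that is, $u_i$ is adjacent to every other vertex (and is a leaf in the other two trees). The same estimate shows that $u_1,u_2,u_3$ are pairwise distinct, since two coincident copies would need degree at least $2(n-3)>n-1$. Hence $G$ has three distinct universal vertices, and therefore contains $K_{1,1,1,\,n-3}$ (the three universal vertices joined to one another and to the remaining $n-3$ vertices) as a subgraph; in particular it already contains the complete bipartite graph $K_{3,\,n-3}$ between the three hubs and the rest.

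The final step is a purely topological obstruction, which I expect to be the main difficulty. Since $1$-planarity is closed under taking subgraphs and $K_{1,1,1,m}\supseteq K_{1,1,1,7}$ whenever $m\ge 7$, it suffices to prove the single statement that $K_{1,1,1,7}$ (equivalently, that a graph on $10$ vertices having three universal vertices) is \emph{not} $1$-planar: then for $n\ge 10$ we have $n-3\ge 7$, so $K_{1,1,1,\,n-3}$ is non-$1$-planar, whence $G$ cannot be $1$-planar, a contradiction. I would emphasize that no edge-counting argument can close this gap, since $K_{1,1,1,7}$ has only $24$ edges against the bound $4\cdot 10-8=32$ (and $K_{3,m}$ always meets the bipartite density bound), so the obstruction is genuinely about the drawing rather than the number of edges. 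The base case I would establish either by a direct analysis of the rotation system and crossing pattern of a hypothetical drawing of the ten-vertex graph $K_{1,1,1,7}$, arguing that the seven ``tripods'' to the three hubs cannot all be routed with at most one crossing per edge, or by invoking the classification of $1$-planar complete multipartite graphs, in which $K_{1,1,1,6}$ is $1$-planar but $K_{1,1,1,7}$ is not — which is also what pins the threshold of the theorem precisely at $n=10$. The hard part is exactly this case analysis for the base graph; everything above it is forced by the degree bound of Property~\ref{pr:one}.
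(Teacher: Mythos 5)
Your proposal is correct and follows essentially the same route as the paper: exhibit caterpillars each containing a vertex of degree $n-3$, argue that these three vertices become distinct universal vertices in any packing, and conclude that the packing contains $K_{3,n-3}\supseteq K_{3,7}$, which is not $1$-planar. The step you flag as the ``hard part'' is handled in the paper simply by citing the known classification result that $K_{3,7}$ is not $1$-planar (Czap and Hud\'ak), so no direct case analysis of $K_{1,1,1,7}$ is needed.
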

\begin{proof}
The triple consists of three isomorphic caterpillars $T_1, T_2, T_3$ with $n \geq 10$ vertices. Each $T_i$ has a backbone of length $5$ and $n-5$ leaves all adjacent to the middle vertex of the spine, which we call the \emph{center} of $T_i$. First, notice that each $T_i$ satisfies Property~\ref{pr:one}, i.e., $\deg_{T_i}(v) \leq n-3$. Namely, the vertex with largest degree in $T_i$ is its center, which has degree $n-3$. Let $G$ be any packing of $T_1$, $T_2$, and $T_3$ and let $v_1$, $v_2$, and $v_3$ be the three vertices of $G$ where the three centers of $T_1, T_2, T_3$, respectively, are mapped. The three vertices $v_1$, $v_2$, and $v_3$ must be distinct because otherwise they would have degree larger than $n-1$ in $G$, which is impossible. For each $v_i$ we have $\deg_{T_i}(v_i)=n-3$ and $\deg_{T_j}(v_i) \geq 1$, for $j \neq i$. This implies that $\deg_G(v_i)=n-1$ for each $v_i$. In other words, each $v_i$ is adjacent to all the other vertices of $G$. Thus, $G$ contains $K_{3,n-3}$ as a subgraph. Since $n \geq 10$ and $K_{3,7}$ is not 1-planar~\cite{CZAP2012505}, $G$ is not $1$-planar.
\end{proof}

Motivated by Theorem~\ref{th:no-3-trees}, we consider triples where one of the caterpillars is a path. Also in this case there exist triples that do not have a 1-planar packing.
\begin{theorem}\label{th:no-path-plus-2-caterpillars}
There exists a triple consisting of a path and two caterpillars with $n=7$ vertices that does not admit a 1-planar packing. 
\end{theorem}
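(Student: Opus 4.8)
The plan is to exhibit a concrete triple and show that \emph{every} packing of it is forced, as an abstract graph, to be a single non-$1$-planar graph. Take $T_1$ and $T_2$ to be two copies of the double star $S_{3,2}$ (the caterpillar whose spine is an edge $s_1s_2$, with $s_1$ carrying three leaves and $s_2$ carrying two), and let $P=P_7$ be the path. Each $T_i$ has a unique vertex of degree $4$ (its center $s_1$), a unique vertex of degree $3$ ($s_2$), and otherwise only leaves; in particular it has \emph{no} vertex of degree $2$, and $\deg_{T_i}(v)\le 4=n-k$, so Property~\ref{pr:one} is satisfied. First I would argue that in any packing $G$ the two centers map to distinct vertices $a,b$: if they coincided, that vertex would have degree at least $4+4>n-1$. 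Since $\deg_G(a)=4+\deg_{T_2}(a)+\deg_P(a)\ge 6=n-1$ and $a$ cannot exceed degree $n-1$, both $\deg_{T_2}(a)=1$ and $\deg_P(a)=1$; hence $a$ (and symmetrically $b$) is universal in $G$ and is an \emph{endpoint} of $P$.

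Since $P$ has exactly two endpoints, the remaining five vertices are all internal in $P$, i.e.\ have $P$-degree $2$. Because each caterpillar has no degree-$2$ vertex and its center is already placed at $a$ or $b$, every one of these five vertices has degree $1$ or $3$ in each $T_i$, the value $3$ being attained only by that tree's unique secondary center. The two secondary centers map to distinct vertices $a'\ne b'$ (coincidence would give degree $3+3+2>n-1$), and each of them has degree $3+1+2=6$, so $a'$ and $b'$ are universal as well. The last three vertices $x,y,z$ then have degree $1+1+2=4$, and since they are already adjacent to the four universal vertices $a,b,a',b'$ this exhausts their incidences; thus $x,y,z$ are pairwise non-adjacent. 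Therefore every packing $G$ equals, up to isomorphism, the graph $K_{1,1,1,1,3}=K_4+\overline{K_3}$ obtained from $K_7$ by deleting a triangle (the $18=3(n-1)$ edges match exactly).

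It remains to prove that $K_{1,1,1,1,3}$ is not $1$-planar, which I expect to be the crux. One option is to invoke the characterization of $1$-planar complete multipartite graphs~\cite{CZAP2012505}, exactly as Theorem~\ref{th:no-3-trees} uses it for $K_{3,7}$. For a self-contained argument I would reason on the drawing of the clique $K_4$ on $\{1,2,3,4\}$: it is drawn either planar (four triangular faces) or with a single crossing of two of its edges. Each of $x,y,z$ together with this $K_4$ induces a $K_5$, so each must connect to all four clique vertices. In the planar case the edge from a vertex placed in a face to the unique clique vertex not on that face can reach its target only through the single empty face, and two such edges are then forced to cross each other there, so one of them is crossed twice. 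In the one-crossing case the clique vertices lie in convex position on the outer boundary and the construction reduces to drawing $K_{3,4}$ inside a disk with its four-vertex part on the boundary; a short routing analysis shows the third inner vertex always produces a doubly crossed edge. Either way no $1$-planar drawing exists, completing the proof. The main obstacle is precisely this last non-$1$-planarity step: the forcing part is elementary degree counting, whereas ruling out all drawings of $K_{1,1,1,1,3}$ (or extracting it cleanly from the multipartite characterization) requires the careful case distinction sketched above.
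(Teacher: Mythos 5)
Your proposal follows essentially the same route as the paper: the same triple (two copies of the double star with leaf counts $3$ and $2$, plus $P_7$), the same degree-counting argument forcing every packing to be $K_7$ minus a triangle, and then non-$1$-planarity of that graph. For the last step the paper simply cites the known fact that $K_7$ minus a triangle is not $1$-planar~\cite{KORZHIK20081319}, so you need not carry out your routing analysis --- which, as sketched, is not quite right in the planar-$K_4$ case anyway, since the edge from an inner vertex to the fourth clique vertex may cross any of the three edges bounding its face rather than being forced through a ``single empty face''.
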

\begin{proof}
Let $T_i$ ($i=1,2$) be a caterpillar with a backbone of length four such that one of the two internal vertices has degree three and the other one has degree four. Let $G$ be a packing of $T_1$, $T_2$ and a path $P$ of $7$ vertices. Let $v_1$, $v_2$, $v_3$, and $v_4$ be the four vertices of $G$ where the internal vertices of the backbones of $T_1$ and $T_2$ are mapped to. We first observe that $v_1$, $v_2$, $v_3$, and $v_4$ must be distinct. Suppose, as a contradiction, that two of them coincide, say $v_1$ and $v_2$; then $\deg_{T_1}(v_1)+\deg_{T_2}(v_1)\geq 6$. On the other hand $\deg_{P}(v_1) \geq 1$, and therefore $\deg_G(v_1) \geq 7$, which is impossible (since $G$ has only $7$ vertices).
Denote by $G_{1,2}$ the subgraph of $G$ containing only the edges of $T_1$ and $T_2$. Two vertices among $v_1$, $v_2$, $v_3$, and $v_4$, say $v_1$ and $v_2$, have degree $5$ in $G_{1,2}$, while the other two have degree $4$ in $G_{1,2}$.  Consider now the edges of $P$. Since the maximum vertex degree in a graph of seven vertices is six, $v_1$ and $v_2$ must be the end-vertices of $P$, while $v_3$ and $v_4$ are internal vertices. This means that they all have degree $6$ in $G$. The vertices distinct from $v_1$, $v_2$, $v_3$, and $v_4$ have degree $2$ in $G_{1,2}$ and degree $4$ in $G$. Thus in $G$ there are four vertices of degree $6$ and three vertices of degree $4$. The only graph of seven vertices with this degree distribution is the graph obtained from~$K_7$ by deleting all the edges of a $3$-cycle, which is known to be non-1-planar~\cite{KORZHIK20081319}.
\end{proof}

\section{1-planar Packings of Two Paths and a Caterpillar }\label{se:2-paths+cater}

In this section we prove that a triple consisting of two paths $P_1$ and $P_2$ and a 5-legged caterpillar $T$ with at least six vertices admits a 1-planar packing. Let $P$ be the backbone of $T$ and let $P'_1$ and $P'_2$ be two paths with the same length as $P$. We first show how to construct a 1-planar packing of $P$, $P'_1$ and $P'_2$. We then modify the computed packing to include the leaves of the caterpillar; this requires transforming some edges of $P'_1$ and $P'_2$ to sub-paths that pass through the added leaves. The resulting packing is a 1-planar packing of $P_1$, $P_2$ and $T$.\\
\indent Let $\Gamma$ be a $1$-planar drawing, possibly with parallel edges, and let $e$ be an edge of $\Gamma$. If $e$ has one crossing $c$, then each of the two parts in which $e$ is divided by $c$ are called \emph{sub-edges} of $e$; if $e$ has no crossing, $e$ itself is called a \emph{sub-edge} of $e$. Let $v$ be a vertex of $\Gamma$; a \emph{cutting curve} of $v$ is a Jordan arc $\gamma$ such that: (i) $\gamma$ has $v$ as an end-point; (ii)  $\gamma$ intersects two edges $e_1=(u_1,v_1)$ and $e_2=(u_2,v_2)$ (possibly $u_1=u_2$ and/or $v_1=v_2$);  (iii) $\gamma$ does not intersect any other edge of $\Gamma$; (iv) $e_1$ and $e_2$ do not cross each other; (v) if $e_1$ and $e_2$ are parallel edges (i.e., $u_1=u_2$ and $v_1=v_2$), they have no crossings. The \emph{stub} of $e_i$ with respect to $\gamma$ is the sub-edge of $e_i$ intersected by $\gamma$ ($i=1,2$). Given a cutting curve $\gamma$ of a vertex $v$, and an integer $k \geq 5$, a \emph{$k$-leaf addition} operation adds $k$ vertices $w_1, w_2, \dots, w_k$ and the edges $(v,w_1), (v,w_2), \dots, (v,w_k)$ to $\Gamma$ in such a way that: (i) the added vertices subdivide the stubs of both $e_1$ and $e_2$ with respect to $\gamma$; (ii) the subgraph induced by $u_1, u_2, v_1,v_2,w_1,w_2,\dots,w_k$ has no multiple edges (see Fig.~\ref{fi:leaf-addition} for an example). In other words, a leaf addition adds a set of vertices adjacent to $v$ and replaces the stubs of $e_1$ and $e_2$ with two edge-disjoint paths. This operation will be used to modify the 1-planar packing of $P$, $P'_1$ and $P'_2$ to include the leaves of the caterpillar. When the value of $k$ is not relevant, a $k$-leaf addition will be simply called a \emph{leaf addition}.
\begin{figure}[t]
	\centering
	\subfigure[]{\includegraphics[width=0.28\textwidth, page=1]{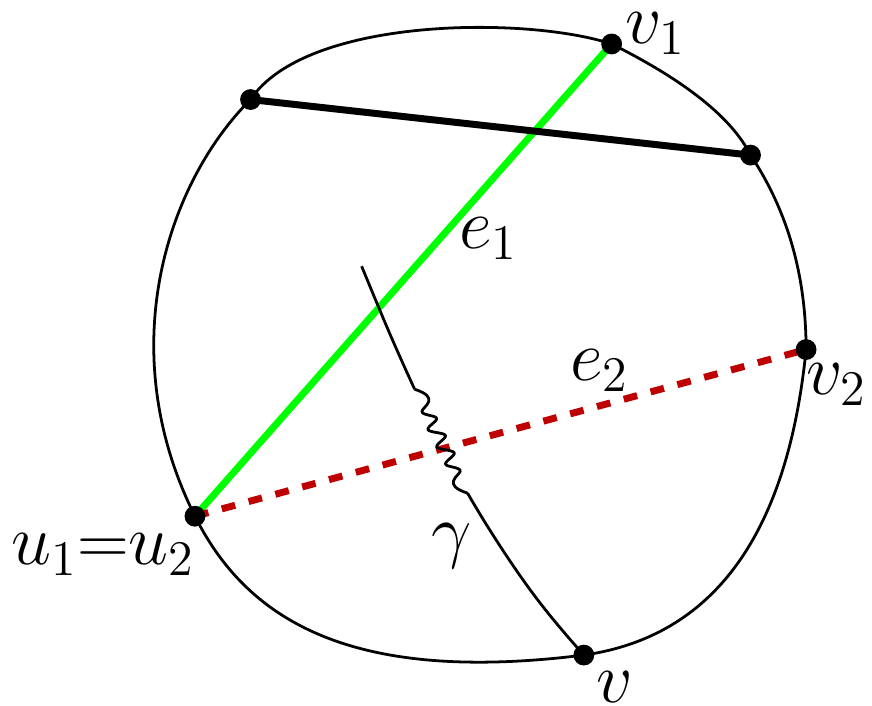}\label{fi:leaf-addition-a}}
	\hfil
	\subfigure[]{\includegraphics[width=0.28\textwidth, page=2]{leaf-addition.pdf}\label{fi:leaf-addition-b}}
	\hfil
	\caption{A $5$-leaf addition operation. The cutting curve is shown with a zig-zag pattern on it.}\label{fi:leaf-addition}
\end{figure}
\begin{lemma}\label{le:leaf-addition}
	Let $\Gamma$ be a 1-planar drawing possibly with parallel edges, let $v$ be a vertex of $\Gamma$ and let $\gamma$ be a cutting curve of $v$. It is possible to execute a $k$-leaf addition for every $k \geq 5$ in such a way that the resulting drawing is still 1-planar. 
\end{lemma}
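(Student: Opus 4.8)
The plan is to give an explicit geometric construction realizing the $k$-leaf addition and then argue that it preserves 1-planarity by a local inspection of the affected edges. First I would set up notation following the definition: write $e_1=(u_1,v_1)$ and $e_2=(u_2,v_2)$ for the two edges met by the cutting curve $\gamma$, and let $s_1$ and $s_2$ be their stubs with respect to $\gamma$ (the sub-edges intersected by $\gamma$, each incident to one endpoint of $e_i$, say $v_1$ and $v_2$ after relabelling). The idea is to carve out a thin simple region $R$ bounded on one side by $\gamma$ and reaching up to $v$, small enough that $R$ meets no edges of $\Gamma$ other than $s_1$ and $s_2$; this is possible precisely because conditions (iii)--(v) in the definition of a cutting curve guarantee that near $\gamma$ only these two stubs appear and they do not cross each other (nor, if parallel, carry a crossing). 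Inside $R$ I place the $k$ new leaves $w_1,\dots,w_k$ and draw the $k$ edges $(v,w_i)$ emanating from $v$, routed so that exactly two of the leaves subdivide $s_1$ and $s_2$.

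The concrete routing is the crux, so I would describe it carefully. Of the $k\ge 5$ leaves, I designate one leaf, say $w_1$, to sit on the stub $s_1$ at its crossing point with $\gamma$, thereby subdividing $s_1$ into two segments, and another, say $w_2$, to sit on $s_2$ at its intersection with $\gamma$, subdividing $s_2$. The remaining $k-2\ge 3$ leaves $w_3,\dots,w_k$ are placed in the interior of $R$ and joined to $v$ by curves that stay inside $R$ and cross nothing. The edges $(v,w_1)$ and $(v,w_2)$ are drawn along $\gamma$ up to the points where $w_1,w_2$ sit on the stubs, so each of these two edges crosses at most the single sub-edge it terminates against; in fact by routing them just inside $R$ they can be made crossing-free except for the one unavoidable meeting at the subdivision vertex, which is now a genuine vertex rather than a crossing. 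I would then verify condition (ii) of the leaf-addition definition (no multiple edges in the induced subgraph on $u_1,u_2,v_1,v_2,w_1,\dots,w_k$): since the $w_i$ are brand-new vertices of degree one attached only to $v$, the only way a multiple edge could arise is between the original endpoints, and this is ruled out by choosing $k\ge 5$ leaves so that the replacement paths are internally disjoint from the original edges and from each other.

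The 1-planarity check is then purely local. Every newly drawn edge $(v,w_i)$ lives inside the region $R$, which by construction is disjoint from all old edges except $s_1$ and $s_2$; and $s_1,s_2$ are now subdivided at $w_1,w_2$, so the former stubs are replaced by shorter sub-edges that inherit whatever crossing structure they had before (condition (v) ensures that if $e_1,e_2$ were parallel they were crossing-free, so no new double crossing is created). Hence no old edge gains a crossing, and each new edge has at most one crossing — in fact zero after the subdivision points are reinterpreted as vertices. Since 1-planarity of the rest of $\Gamma$ is untouched, the resulting drawing is 1-planar.

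I expect the main obstacle to be the bookkeeping around the stubs: one must make sure that subdividing $s_1$ and $s_2$ does not accidentally create an edge with two crossings (for instance if a stub already carried a crossing on its far half) and that the two replacement paths remain edge-disjoint, which is exactly what the hypotheses (iv) and (v) on the cutting curve are designed to prevent. The role of the bound $k\ge 5$ is the other subtle point: it guarantees enough spare leaves (at least three beyond the two used to subdivide) so that the multi-edge condition (ii) can always be satisfied while keeping every new edge incident to $v$, and a clean proof should make explicit where these extra leaves are needed rather than treating the bound as arbitrary.
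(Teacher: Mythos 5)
Your construction does not implement the operation that the lemma is about. Condition (i) of the $k$-leaf addition requires that the added vertices subdivide the stubs of \emph{both} $e_1$ and $e_2$: the stub of $e_1$ and the stub of $e_2$ must each be replaced by a path passing through the new vertices, and the two replacement paths must be edge-disjoint. This is precisely what makes the operation useful later --- in the packing of two paths and a caterpillar, every leaf $w_i$ must end up as an internal vertex of both spanning paths $P_1$ and $P_2$ in addition to being adjacent to $v$. In your routing only $w_1$ subdivides $s_1$ and only $w_2$ subdivides $s_2$, while $w_3,\dots,w_k$ are, in your own words, ``brand-new vertices of degree one attached only to $v$.'' That is not a $k$-leaf addition; if it were plugged into the proof of Lemma~\ref{le:5legged-6ormore}, the resulting graph would not contain $P_1$ and $P_2$ as spanning subgraphs.

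As a consequence, the genuinely hard part of the lemma is skipped. After planarizing any crossings on $e_1,e_2$ and deleting the two stubs, one must insert into the resulting face a gadget consisting of the star from $v$ to $w_1,\dots,w_k$ together with two edge-disjoint paths, each visiting \emph{all} of $w_1,\dots,w_k$ and ending at the (up to four, possibly coincident or dummy) former endpoints of the stubs. This dense local structure cannot be drawn without crossings, so one must verify that no edge is crossed twice (in particular, a crossed attaching edge must not be glued to a dummy vertex, or the reconstituted edge of $\Gamma$ would carry two crossings) and that no multiple edge arises when attaching vertices coincide, as happens when $e_1$ and $e_2$ are parallel. The paper handles this with explicit gadgets for each parity class of $k$ and separately for the parallel and non-parallel cases; the hypothesis $k\ge 5$ is what makes such gadgets exist, not a supply of spare pendant leaves as you suggest. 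Your argument would need to be rebuilt essentially from scratch along these lines.
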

\begin{proof}
	Denote by $e_1$ and $e_2$ the two edges crossed by $\gamma$. If one of them or both are crossed in $\Gamma$ replace their crossing points with dummy vertices. Let $e'_i$ be the stub of $e_i$ with respect to $\gamma$ (if $e_i$ is not crossed in $\Gamma$, $e'_i$ coincides with $e_i$). After the replacement of the crossings with the dummy vertices the two stubs $e'_1$ and $e'_2$ have no crossing. Since $\gamma$ does not cross any edge distinct from $e_1$ and $e_2$, the drawing $\Gamma'$ obtained by removing $e'_1$ and $e'_2$ has a face $f$ whose boundary contains the vertex $v$ and all the end-vertices of $e'_1$ and of $e'_2$ (there are at least two and at most four such vertices). The idea now is to insert into the face $f$, without creating any crossing, a gadget that realizes the $k$-leaf addition for the desired value of $k \geq 5$. A gadget has $k$ vertices that will be added to $\Gamma$, a vertex that will be
%	made coincident
	identified
	with $v$, and four vertices $a$, $b$, $c$, and $d$ that will be
%	made coincident
	identified
	with the end-vertices of $e'_1$ and $e'_2$. The four vertices $a$, $b$, $c$, and $d$ will be called \emph{attaching vertices} and the edges incident to them will be called \emph{attaching edges}. In order to guarantee that the leaf addition is valid and that the drawing $\Gamma''$ obtained by the insertion of the gadget inside $f$ is $1$-planar, we have to pay attention to two aspects: (i) if an attaching edge is crossed in the gadget, then its attaching vertex cannot be
%	made coincident
	identified
	with a dummy vertex (otherwise when we remove the dummy vertex we obtain an edge that is crossed twice); (ii) if two attaching vertices of the gadget are coincident (because two end-vertices of $e'_1$ and $e'_2$ coincide), then the corresponding attaching edges must not have the second end-vertex in common in the gadget (otherwise the leaf addition is not valid because it creates multiple edges). We use different gadgets depending on whether $e_1$ and $e_2$ are parallel edges or not. If they are parallel edges, we use the gadgets of Figs.~\ref{fi:gadgets-parallel-1}--\ref{fi:gadgets-parallel-4} and~\ref{fi:gadgets-parallel-5}. Notice that in this case, $e_1$ and $e_2$ are not crossed by definition of cutting curve. It follows that $f$ has no dummy vertex and (i) is guaranteed. On the other hand, both end-vertices of $e_1$ and $e_2$ coincide and therefore the end-vertices of the attaching edges that are not attaching vertices must be distinct. This is true for the gadgets used in this case. If $e_1$ and $e_2$ are non-parallel, we use the gadgets of Figs.~\ref{fi:gadgets-non-parallel-1}--\ref{fi:gadgets-non-parallel-4} and~\ref{fi:gadgets-non-parallel-5}. All these gadgets have only one attaching edge that is crossed (labeled $d$ in the figure); also, vertex $d$ can be
%	made coincident
	identified
	with vertex $c$ without creating multiple edges. If $e_1$ and $e_2$ are non-parallel, at most two end-vertices of $e'_1$ and $e'_2$ are dummy; they cannot belong to the same stub, and they cannot coincide (because $e_1$ and $e_2$ do not cross each other). Thus we can
%	make $d$ coincident
	identify $d$
	with a non-dummy vertex and we can
%	make $c$ and $d$ coincident
	identify $c$ and $d$
	if needed.
\end{proof}

\begin{figure}[tbp]
	\centering
	\subfigure[$k=5$]{\includegraphics[width=0.23\textwidth, page=1]{gadgets-parallel.pdf}\label{fi:gadgets-parallel-1}}
	\hfil
	\subfigure[$k=6$]{\includegraphics[width=0.23\textwidth, page=2]{gadgets-parallel.pdf}\label{fi:gadgets-parallel-2}}
	\hfil
	\subfigure[$k=7$]{\includegraphics[width=0.23\textwidth, page=3]{gadgets-parallel.pdf}\label{fi:gadgets-parallel-3}}
	\hfil
	\subfigure[$k>6$ even]{\includegraphics[width=0.23\textwidth, page=4]{gadgets-parallel.pdf}\label{fi:gadgets-parallel-4}}
	\hfil
	\subfigure[$k=5$]{\includegraphics[width=0.23\textwidth, page=1]{gadgets-non-parallel.pdf}\label{fi:gadgets-non-parallel-1}}
	\hfil
	\subfigure[$k=6$]{\includegraphics[width=0.23\textwidth, page=2]{gadgets-non-parallel.pdf}\label{fi:gadgets-non-parallel-2}}
	\hfil
	\subfigure[$k=7$]{\includegraphics[width=0.23\textwidth, page=3]{gadgets-non-parallel.pdf}\label{fi:gadgets-non-parallel-3}}
	\hfil
	\subfigure[$k>6$ even]{\includegraphics[width=0.23\textwidth, page=4]{gadgets-non-parallel.pdf}\label{fi:gadgets-non-parallel-4}}
	\hfil
	\subfigure[$k>7$ odd]{\includegraphics[width=0.46\textwidth, page=5]{gadgets-parallel.pdf}\label{fi:gadgets-parallel-5}}
	\hfil
	\subfigure[$k>7$ odd]{\includegraphics[width=0.46\textwidth, page=5]{gadgets-non-parallel.pdf}\label{fi:gadgets-non-parallel-5}}
	\caption{Gadgets for the proof of Lemma~\ref{le:leaf-addition}. \subref{fi:gadgets-parallel-1}--\subref{fi:gadgets-parallel-4} and \subref{fi:gadgets-parallel-5} are used for parallel edges; \subref{fi:gadgets-non-parallel-1}--\subref{fi:gadgets-non-parallel-4} and \subref{fi:gadgets-non-parallel-5} are used for non-parallel edges.}\label{fi:gadgets-parallel}
\end{figure}	
%
%\begin{figure}[tbp]
%	\centering
%	\subfigure[$k=5$]{\includegraphics[width=0.28\textwidth, page=1]{figures/gadgets-non-parallel.pdf}\label{fi:gadgets-non-parallel-a}}
%	\hfil
%	\subfigure[$k=6$]{\includegraphics[width=0.28\textwidth, page=2]{figures/gadgets-non-parallel.pdf}\label{fi:gadgets-non-parallel-b}}
%	\hfil
%	\subfigure[$k=7$]{\includegraphics[width=0.28\textwidth, page=3]{figures/gadgets-non-parallel.pdf}\label{fi:gadgets-non-parallel-c}}
%	\hfil
%	\subfigure[$k>6$ even]{\includegraphics[width=0.48\textwidth, page=4]{figures/gadgets-non-parallel.pdf}\label{fi:gadgets-non-parallel-d}}
%	\hfil
%	\subfigure[$k>7$ odd]{\includegraphics[width=0.48\textwidth, page=5]{figures/gadgets-non-parallel.pdf}\label{fi:gadgets-non-parallel-e}}
%	\caption{Gadgets used in the proof of Lemma~\ref{le:leaf-addition} for non-parallel edges.}\label{fi:gadgets-non-parallel}
%\end{figure}	
%
We are ready to describe our construction of a 1-planar packing of $P_1$, $P_2$, and $T$. We use different techniques for different lengths of the backbone of~$T$. 
%
%\begin{figure}[tbp]
%	\centering
%	\subfigure[]{\includegraphics[width=0.38\textwidth, page=1]{figures/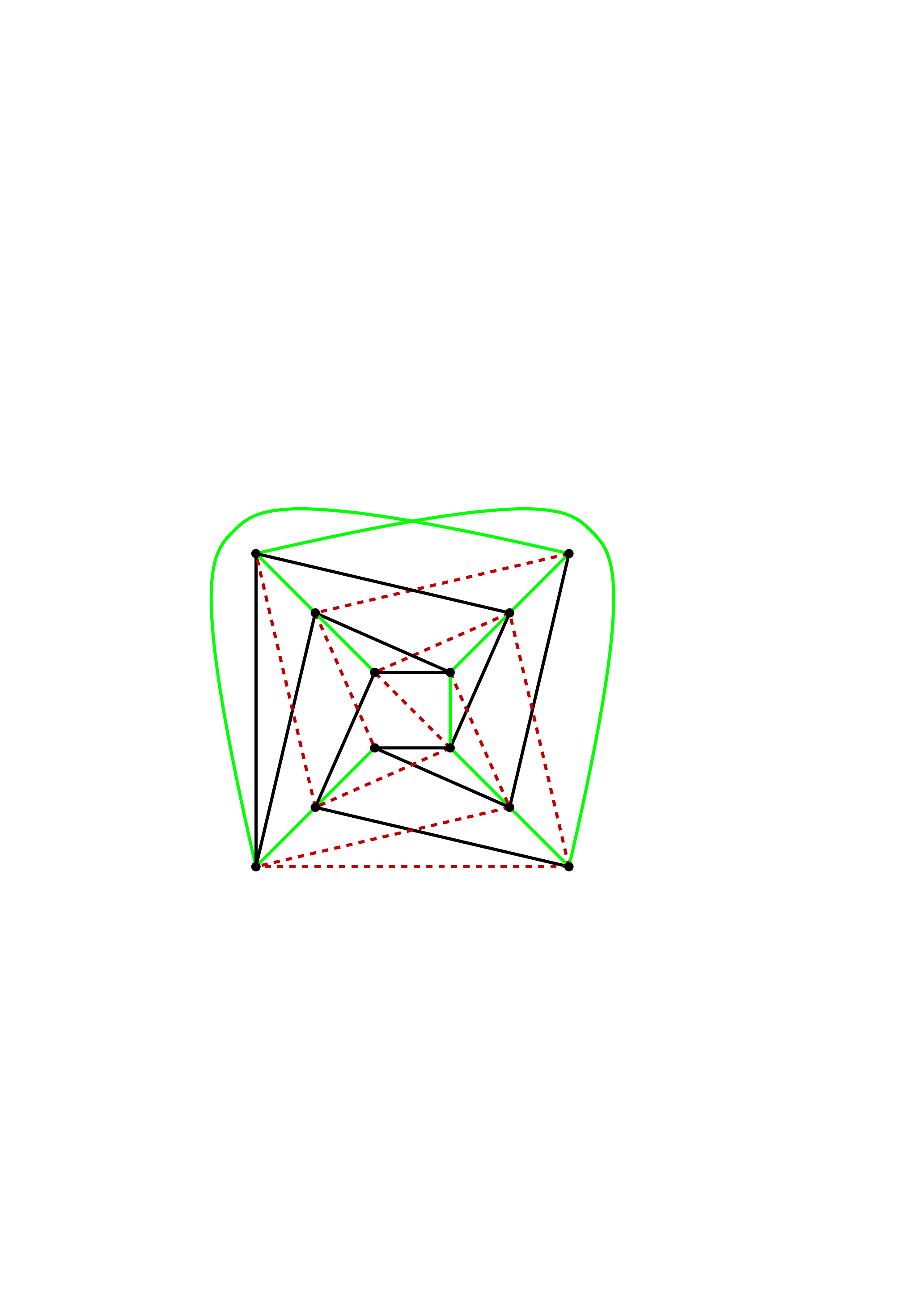}\label{fi:5-legged-1-a}}
%	\hfil
%	\subfigure[]{\includegraphics[width=0.38\textwidth, page=2]{figures/5-legged-1.pdf}\label{fi:5-legged-1-b}}
%	\caption{\subref{fi:5-legged-1-a} A 1-planar packing of three paths with $16$ vertices. \subref{fi:5-legged-1-b} A cutting curve for each internal vertex of the black path.}\label{fi:5-legged-1-ab}
%\end{figure}
%
\begin{figure}[tbp]


	\centering
	\subfigure[$n'=4k$]{\includegraphics[width=0.23\textwidth, page=2]{5-legged-1.pdf}\label{fi:5-legged-1-b}}
	\hfill
	\subfigure[$n'=4k+1$]{\includegraphics[width=0.23\textwidth, page=4]{5-legged-1.pdf}\label{fi:5-legged-1-c}}
	\hfill
	\subfigure[$n'=4k+2$]{\includegraphics[width=0.23\textwidth, page=6]{5-legged-1.pdf}\label{fi:5-legged-1-d}}
	\hfill
	\subfigure[$n'=4k+3$]{\includegraphics[width=0.23\textwidth, page=8]{5-legged-1.pdf}\label{fi:5-legged-1-e}}
	\caption{1-planar packings of three paths with $n'$$\geq$$8$ vertices (case $k$=$3$); A cutting curve is shown (zig-zag pattern) for each internal vertex of the black path.
%	\subref{fi:5-legged-1-b} $n'$=$16$; \subref{fi:5-legged-1-c} $n'$=$17$; \subref{fi:5-legged-1-d} $n'$=$18$; \subref{fi:5-legged-1-e} $n'$=$19$.
	}\label{fi:5-legged-1-ce}
\end{figure}

\begin{lemma}\label{le:5legged-6ormore}
Two paths and a 5-legged caterpillar whose backbone contains $n' \geq 6$ vertices admit a 1-planar packing.
\end{lemma}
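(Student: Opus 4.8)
The plan is to reduce the problem about two paths and a 5-legged caterpillar $T$ to the simpler problem of packing three paths of equal length, and then to reinsert the leaves of $T$ using the leaf-addition machinery of Lemma~\ref{le:leaf-addition}. Concretely, let $P$ be the backbone of $T$, a path on $n' \geq 6$ vertices, and let $P'_1$ and $P'_2$ be two paths of the same length $n'$. First I would construct a 1-planar packing $\Gamma$ of the three paths $P$, $P'_1$, $P'_2$ of the prescribed length. This is the combinatorial heart of the construction, and I expect it to be done by an explicit drawing scheme (as suggested by Fig.~\ref{fi:5-legged-1-ce}): lay out the $n'$ vertices in some fixed pattern, route $P$ (say, drawn in black) as the reference path, and route $P'_1$ and $P'_2$ so that every edge is crossed at most once and the three paths are edge-disjoint and span all $n'$ vertices. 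Because $n'$ can have any residue modulo $4$, I would handle the four cases $n' = 4k, 4k+1, 4k+2, 4k+3$ separately, giving a concrete periodic drawing in each case and arguing that the periodic pattern is indeed 1-planar.

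The second ingredient is to equip this base packing with the cutting curves needed to invoke Lemma~\ref{le:leaf-addition}. For each internal vertex $v_i$ of the backbone $P$ that carries at least five leaves in $T$, I would exhibit a cutting curve $\gamma_i$ of $v_i$ in $\Gamma$: a Jordan arc emanating from $v_i$ that crosses exactly two edges $e_1, e_2$ of $\Gamma$ (belonging to $P'_1$ and $P'_2$), meets no other edge, and satisfies the non-crossing and non-parallel-if-crossed conditions required by the definition of cutting curve. The point of drawing the base packing periodically is precisely that such a curve is available at every internal spine vertex (this is what the zig-zag curves in Fig.~\ref{fi:5-legged-1-ce} indicate), and that distinct cutting curves can be chosen to be pairwise disjoint so that the leaf additions do not interfere with one another.

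With the cutting curves in hand, the reinsertion of the leaves is immediate: for each spine vertex $v_i$ of degree $d_i + 2 \geq 7$ in $T$ (so $d_i \geq 5$ leaves are attached), I would apply a $d_i$-leaf addition at $v_i$ along $\gamma_i$. By Lemma~\ref{le:leaf-addition} this adds the $d_i$ new leaf vertices as degree-one neighbours of $v_i$ and replaces the stubs of $e_1, e_2$ by two edge-disjoint paths through these leaves, all while keeping the drawing 1-planar. The two edge-disjoint paths produced by each leaf addition are exactly what lets $P'_1$ and $P'_2$ absorb the new vertices and grow into the two paths $P_1, P_2$ on the full vertex set, while the $(v_i, w_j)$ edges together with the original backbone $P$ assemble $T$. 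Since a 5-legged caterpillar has every spine vertex of degree $2$ (no leaves, nothing to do) or degree at least $7$ (at least five leaves, Lemma~\ref{le:leaf-addition} applies because $k \geq 5$), every spine vertex is handled, and the resulting 1-plane graph is a 1-planar packing of $P_1$, $P_2$, and $T$.

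The main obstacle is the first step: producing the explicit 1-planar packing of three equal-length paths that simultaneously (a) covers all four length residues modulo $4$, and (b) leaves a valid cutting curve at every internal backbone vertex. The leaf-addition step is essentially bookkeeping once Lemma~\ref{le:leaf-addition} is available, but getting a single periodic pattern whose crossing structure is tight enough to remain 1-planar yet loose enough to expose a clean two-edge cutting curve at each spine vertex — and verifying that the two stubs cut by each $\gamma_i$ satisfy the technical conditions (edges not crossing each other, parallel edges uncrossed) — is where the real care is needed.
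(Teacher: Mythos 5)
Your proposal follows essentially the same route as the paper's proof: build an explicit 1-planar packing of the three equal-length paths $P$, $P'_1$, $P'_2$ (split by residue of $n'$, with the paper constructing the $n'\equiv 0 \pmod 4$ pattern and appending one to three extra vertices for the other residues, plus separate small drawings for $n'=6,7$), exhibit a cutting curve at each internal backbone vertex, and then apply Lemma~\ref{le:leaf-addition} to absorb the leaves into $P'_1$ and $P'_2$. The only content you leave implicit --- the explicit periodic drawings and cutting curves --- is exactly what the paper supplies via Figs.~\ref{fi:5-legged-1-ce} and~\ref{fi:5-legged-2}, so the argument is the same.
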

\begin{proof}
We start with the construction of a 1-planar packing of the three paths $P'_1$, $P'_2$ and $P$. Let $n'$ be the number of vertices of $P'_1$, $P'_2$ and $P$, assume first that $n' \geq 8$ and $n' \equiv 0 \pmod 4$. A 1-planar packing of $P'_1$, $P'_2$ and $P$ for this case is shown in Fig.~\ref{fi:5-legged-1-b} for $n'=16$ and it is easy to see that it can be extended to any $n'$ multiple of 4.
%\textcolor{blue}{(see also Fig.~\ref{fi:5legged-8-b} in Appendix~\ref{ap:2-paths+cater}).}
Assume that the backbone $P$ of $T$ is the path shown in black in Fig.~\ref{fi:5-legged-1-b}. To add the leaves of $T$ to the construction we define a cutting curve for each vertex $v$ that has some leaves attached; we then execute a leaf addition operation for each such vertex. By Lemma~\ref{le:leaf-addition}, it is possible to execute each leaf addition so to guarantee the 1-planarity of the resulting drawing. The cutting curve for each internal vertex of $P$ is shown in Fig.~\ref{fi:5-legged-1-b} with a zig-zag pattern.
Note that, regardless of the order in which the leaf additions are executed, the cutting curves remain valid.\\
\indent Suppose now that $n' \geq 8$ and $n' \not \equiv 0 \pmod 4$. In this case we first construct a 1-planar packing of three paths with $n''=4k$ vertices (with $k = \lfloor \frac{n'}{4} \rfloor$) using the same construction as in the previous case and then we add one, two or three vertices as shown in Figs.~\ref{fi:5-legged-1-c}-\ref{fi:5-legged-1-e},
%\textcolor{blue}{(and Figs.~\ref{fi:5legged-8-c}-\ref{fi:5legged-8-e} in Appendix~\ref{ap:2-paths+cater}),}
where we also show the cutting curves for each internal vertex of $P$. If $n'$ is equal to $6$ or $7$, we use the same approach; the only difference is in the construction of the 1-planar packing of $P'_1$, $P'_2$ and $P$. The construction for such a packing and the cutting curves for the internal vertices of $P$ are shown in Figs.~\ref{fi:5-legged-2-a} and~\ref{fi:5-legged-2-b}.
\end{proof}
\begin{restatable}{lemma}{fivelegged}\label{le:5legged-5}
Two paths and a 5-legged caterpillar $T$ whose backbone contains $n' = 5$ vertices admit a 1-planar packing, unless $T$ is a path.
\end{restatable}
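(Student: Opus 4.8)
The plan is to first dispose of the exceptional case and then build the packing by an explicit construction, since the two-phase strategy of Lemma~\ref{le:5legged-6ormore} cannot be reused here.

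If $T$ is a path, then $T$, $P_1$, $P_2$ all have $n=n'=5$ vertices, and since we are packing $k=3$ connected graphs, Property~\ref{pr:one} requires $n\geq 2k=6$; hence no $1$-planar packing exists, which accounts for the stated exception. I would therefore assume from now on that $T$ is not a path. Then at least one spine vertex of $T$ carries leaves, and because $T$ is $5$-legged such a vertex has at least $5$ leaves (at least $6$ if it is an end $v_1$ or $v_3$ of the spine, which has only one spine-neighbour). In particular $n\geq 10$, so the vertex budget is no longer an obstruction.

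The first point I would stress is why the recipe of Lemma~\ref{le:5legged-6ormore} does not transfer: that recipe starts from a $1$-planar packing of three $n'$-vertex paths and then applies leaf additions, but three $5$-vertex paths have $12>\binom{5}{2}=10$ edges, so they cannot be packed even into a $5$-vertex multigraph without at least two pairs of parallel edges, whereas a leaf addition (Lemma~\ref{le:leaf-addition}) resolves only the parallel pair sitting on a single cutting curve, and $T$ may offer just one legged spine vertex (e.g. only $v_2$). I would therefore construct the packing directly, organising the argument as a case analysis on the set of legged spine vertices, up to the symmetry $v_1\leftrightarrow v_3$: (a) only $v_2$; (b) only $v_1$; (c) $v_1$ and $v_3$; (d) $v_1$ and $v_2$; (e) $v_1,v_2,v_3$. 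In every case I would exhibit a concrete $1$-plane drawing and, exactly as in the figures of Lemma~\ref{le:5legged-6ormore}, use a small gadget that can be repeated to realise an arbitrary number $d\geq 5$ of leaves at a legged vertex.

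The case I expect to be the crux is (a), where $v_2$ is essentially a universal hub. Here $\deg_T(v_2)=d+2$ with $d\geq 5$ the number of its leaves and $n=5+d$, so $\deg_T(v_2)=n-3$ meets the bound of Property~\ref{pr:one} with equality. Consequently $\deg_{P_1}(v_2)+\deg_{P_2}(v_2)\leq (n-1)-(n-3)=2$, which forces $v_2$ to be an endpoint of both $P_1$ and $P_2$ and to be adjacent in the packing to every other vertex; moreover the two hub-edges of $P_1,P_2$ must join $v_2$ to $v_0$ and to $v_4$, since these are its only non-neighbours in $T$. Thus the star centred at $v_2$ is fixed (and planar), and the remaining edges to be placed are two edge-disjoint spanning paths on the $n-1$ peripheral vertices (namely $P_1$ and $P_2$ with their hub-edges removed) together with the two backbone edges $v_0v_1$ and $v_3v_4$. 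My plan is to place the peripheral vertices on a circle, draw the peripheral part of $P_1$ along the circle (crossing-free), and route the peripheral part of $P_2$, together with $v_0v_1$ and $v_3v_4$, so that each of their edges crosses at most one spoke of the hub-star and no two of them cross each other; verifying that such a routing exists and scales to all $d\geq 5$ is the main obstacle. The remaining cases are looser, since legged end-vertices do not force a universal hub and leave more room for routing, and I would settle them by analogous explicit drawings, after which the $1$-planarity of each construction is immediate from the drawing.
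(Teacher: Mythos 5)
Your treatment of the exceptional case and your degree analysis for the ``only $v_2$ is legged'' case are both correct (indeed $\deg_T(v_2)=n-3$ forces $v_2$ to be an endpoint of both paths, with its two path-edges going to $v_0$ and $v_4$, and forces $v_2$ to be universal in the packing). But the proof has a genuine gap: for a purely constructive statement like this one, the constructions \emph{are} the proof, and you do not supply any of them. In your crux case (a) you reduce the problem to routing a second edge-disjoint Hamiltonian path on the $n-1$ peripheral vertices, plus the edges $v_0v_1$ and $v_3v_4$, so that each such edge crosses at most one spoke of the universal star and the routed edges are pairwise compatible --- and then you explicitly flag that verifying this routing exists for all $d\geq 5$ is ``the main obstacle.'' That obstacle is exactly what a proof must overcome: chords drawn inside the disk can cross at most one spoke only if their endpoints are at circular distance at most two, chords drawn outside the disk cross each other whenever their endpoints interleave, and you have not exhibited a choice of the second Hamiltonian path (with the required endpoint at $v_0$ or $v_4$) satisfying these constraints. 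Cases (b)--(e) are dismissed as ``looser'' with ``analogous explicit drawings,'' none of which are given.

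It is also worth noting that the paper does not build five separate from-scratch constructions. It reuses the machinery of Lemma~\ref{le:5legged-6ormore} in a sharper way: it starts from a single fixed $1$-planar \emph{multigraph} packing of the three $5$-vertex backbone paths containing exactly two pairs of parallel edges (one edge of each pair in $P'_1$, the other in $P'_2$), arranged so that the cutting curves of the legged spine vertices pass through those parallel pairs; the leaf-addition gadgets of Lemma~\ref{le:leaf-addition} then attach the leaves \emph{and} subdivide the parallel stubs into internally disjoint paths, destroying the multi-edges in one stroke. This uniformly settles every case with at least two legged spine vertices, leaving only the two single-legged-vertex subcases (legged middle vertex, legged end vertex of the spine) to ad-hoc constructions. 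If you want to salvage your plan, you either need to carry out all five explicit drawings in full, or adopt this ``pack the backbones with controlled parallel edges, then repair them via leaf addition'' strategy, which is where the real idea of the lemma lies.
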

\begin{proof}
	If $T$ is a path, then $P_1$, $P_2$ and $T$ are all paths of length five, and by Property~\ref{pr:one}, a 1-planar packing of $P_1$, $P_2$ and $T$ does not exist. Suppose therefore that at least one internal vertex of the backbone $P$ of $T$ has some leaves attached. We use an approach similar to the one of Lemma~\ref{le:5legged-6ormore}. However, as just explained, a 1-planar packing of $P'_1$, $P'_2$ and $P$ does not exist in this case. We start with a 1-planar packing with two pairs of parallel edges. For each pair, one edge belongs to $P'_1$ and the other one to $P'_2$. We will remove the parallel edges by performing the leaf addition operations. To this aim we must guarantee that there is a cutting curve for each pair of parallel edges. The 1-planar packing $P'_1$, $P'_2$ and $P$ and the cutting curves for the internal vertices of $P$ are shown in Fig.~\ref{fi:5-legged-2-c}, for the case when at least two vertices have leaves attached. Indeed, if only two vertices have leaves attached, they are either consecutive along the backbone or not. In the first case, these two vertices are mapped to the vertices labeled $a$ and $b$ in Fig.~\ref{fi:5-legged-2-c} and the depicted cutting curves will remove the parallel edges; in the second case, the two vertices are mapped to the vertices labeled $a$ and $c$ and also in this case the depicted cutting curves will remove the parallel edges.
	\begin{figure}[tb]
		\centering
		\subfigure[$n'=7$]{\includegraphics[width=0.23\textwidth, page=2]{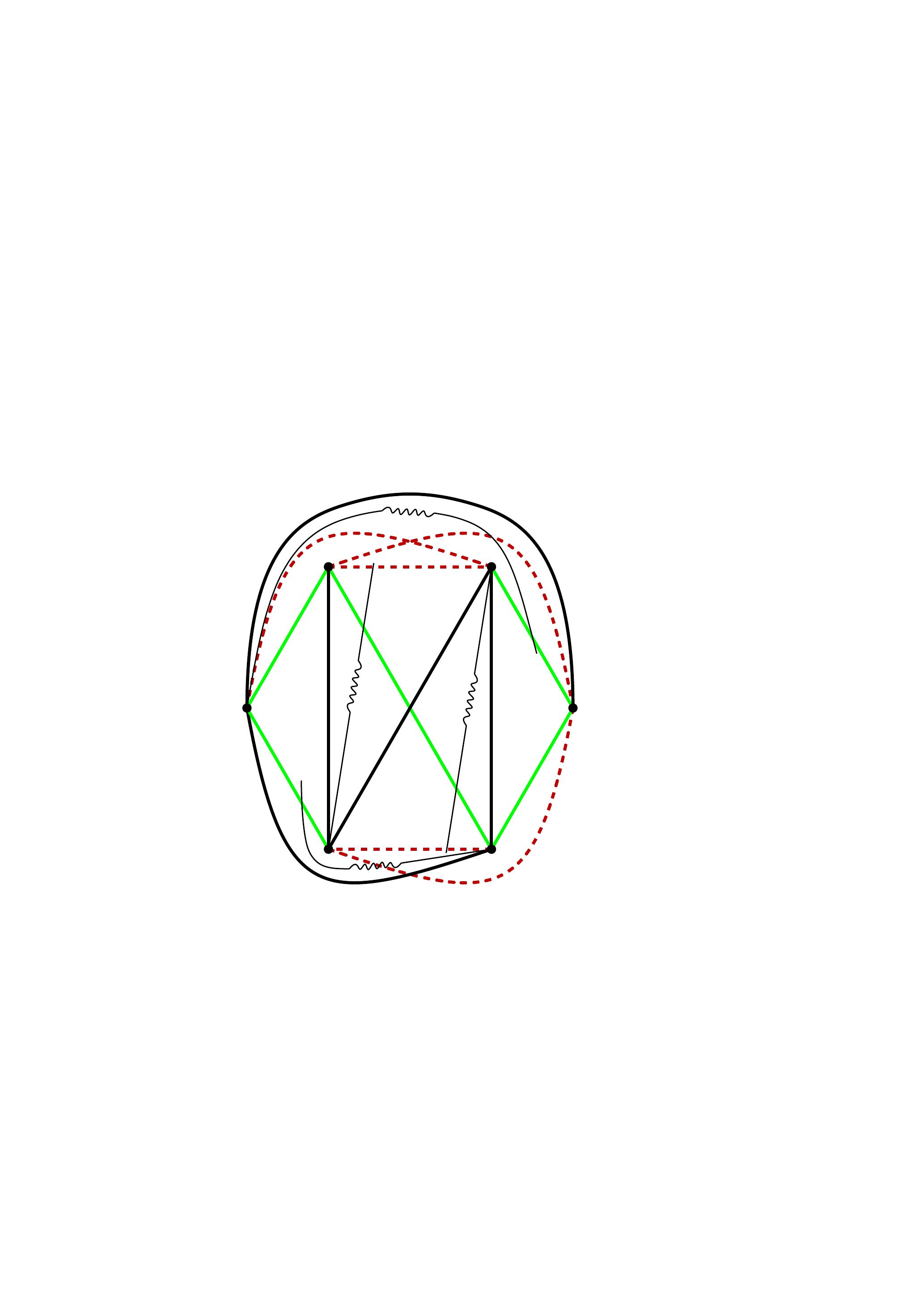}\label{fi:5-legged-2-a}}
		\hfill
		\subfigure[$n'=6$]{\includegraphics[width=0.23\textwidth, page=1]{5-legged-2.pdf}\label{fi:5-legged-2-b}}
		\hfill
		\subfigure[$n'=5$]{\includegraphics[width=0.23\textwidth, page=4]{5-legged-2.pdf}\label{fi:5-legged-2-c}}
		%\hfill
		%\subfigure[]{\includegraphics[width=0.23\textwidth, page=6]{5-legged-2.pdf}\label{fi:5-legged-2-d}}
		\caption{1-planar packings of three paths with $n' \in \{5,6,7\}$ vertices, with a cutting curve (zig-zag pattern) for each internal vertex of the black path.
%			\subref{fi:5-legged-2-a} $n'=7$. 
%			\subref{fi:5-legged-2-b} $n'=6$. 
%			\subref{fi:5-legged-2-c} $n'=5$.
		}\label{fi:5-legged-2}
	\end{figure}
	 If only one vertex of $P$ has leaves attached, we have only one cutting curve and thus it is not possible to intersect both pairs of parallel edges. To handle this case we use an ad-hoc technique which can be found in the appendix.
\end{proof}
%\begin{figure}[tbp]
%	\centering
%	\subfigure[]{\includegraphics[width=0.32\textwidth, page=2]{figures/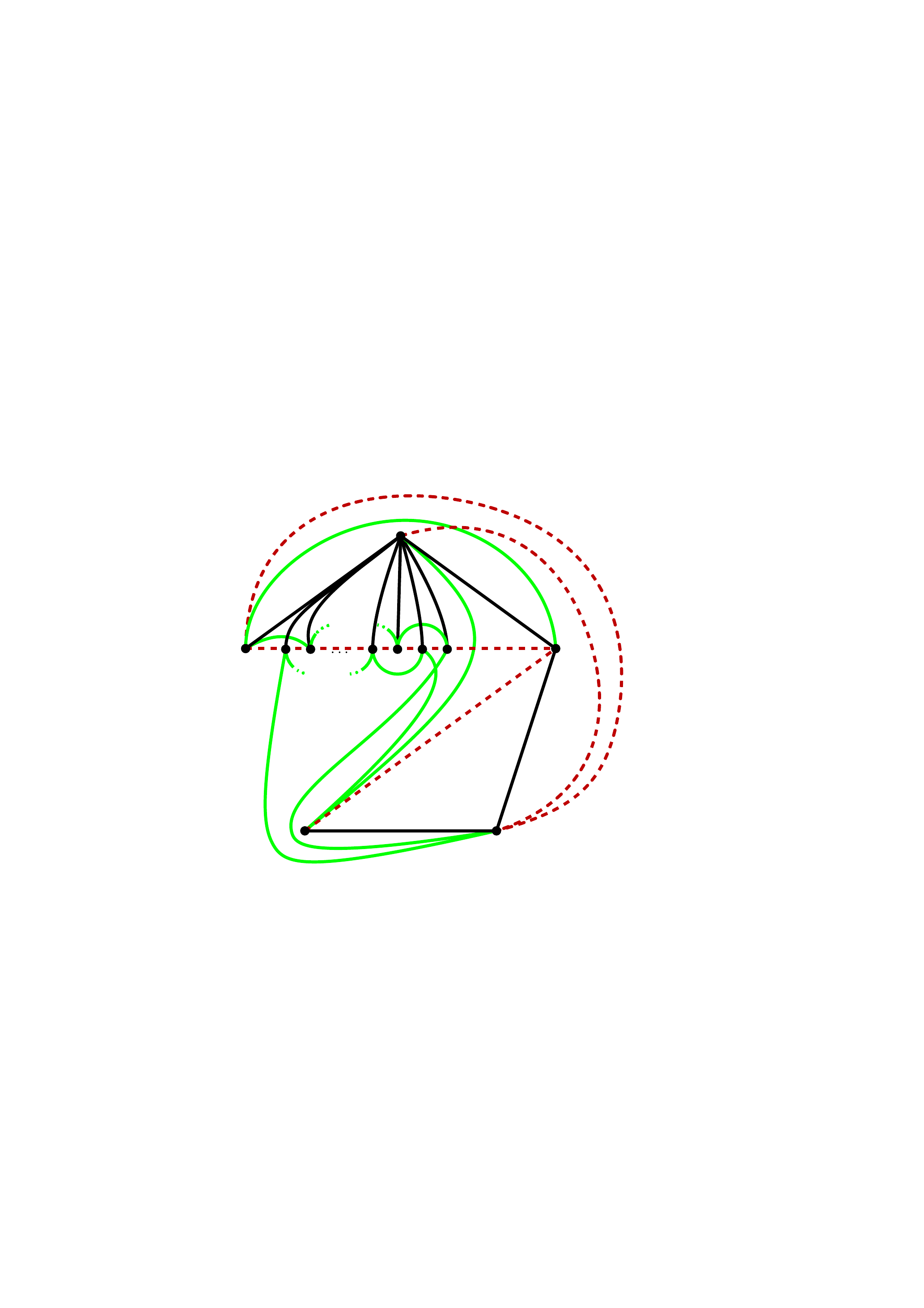}\label{fi:5-backbone-a}}
%	\hfil
%	\subfigure[]{\includegraphics[width=0.32\textwidth, page=1]{figures/5-backbone.pdf}\label{fi:5-backbone-b}}
%	\caption{Illustration for Lemma~\ref{le:5legged-5}. (a) Even number of leaves; (b) odd number of leaves.}\label{fi:5-backbone}
%\end{figure}

The next theorem gives a complete characterization for the case in which the backbone of $T$ has length four.
\begin{restatable}{theorem}{twopathscaterpillarfourvertices}\label{th:caterpillar-4}
	Two paths and a caterpillar $T$ whose backbone contains $n' = 4$ vertices admit a 1-planar packing if and only if $n \geq 6$ and $\deg_T(v) \leq n-3$ for every vertex $v$.
\end{restatable}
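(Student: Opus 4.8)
\emph{Necessity} is immediate from Property~\ref{pr:one}: since $P_1,P_2,T$ are three connected $n$-vertex graphs, any $1$-planar packing needs $n\ge 2\cdot 3=6$ and $\deg_{G_i}(v)\le n-3$ for every vertex, and in particular $\deg_T(v)\le n-3$. The content of the statement is thus the \emph{sufficiency}, which I would prove constructively. Write the backbone of $T$ as $v_0v_1v_2v_3$, so its spine is $v_1,v_2$, and let $a,b$ be the numbers of leaves that $T$ attaches to $v_1,v_2$ (counting $v_0,v_3$). Then $a+b=n-2$ and $\deg_T(v_1)=a+1$, $\deg_T(v_2)=b+1$, so the two hypotheses read exactly $a+b=n-2$ and $2\le a,b\le n-4$; in particular each spine vertex already carries at least one non-backbone leaf.

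Following the scheme of Lemmas~\ref{le:5legged-6ormore} and~\ref{le:5legged-5}, the plan is to first pack the three length-$3$ backbone paths $P=v_0v_1v_2v_3$, $P'_1$ and $P'_2$ on the four vertices $v_0,\dots,v_3$, and then to recover $T,P_1,P_2$ by attaching the remaining $a-1$ leaves to $v_1$ and the remaining $b-1$ leaves to $v_2$ through leaf additions, each added leaf being at once a new tree-leaf and a subdivision vertex of $P'_1$ or $P'_2$. Three pairwise edge-disjoint Hamiltonian paths on four vertices use $9$ edges and so cannot live in $K_4$; as in the $n'=5$ case, the base packing is a $1$-planar multigraph carrying exactly $9-6=3$ excess parallel copies, and each of these must be destroyed by a subdivision if the final graph is to be simple. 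I would therefore choose the base drawing, and the two cutting curves at $v_1,v_2$, so that the three excess copies lie on the stubs that the leaf additions actually subdivide.

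The main obstacle is that leaf additions are available only when enough leaves are inserted: Lemma~\ref{le:leaf-addition} supplies the gadgets of Fig.~\ref{fi:gadgets-parallel} only for $k\ge 5$, and even the \emph{definition} subdivides \emph{both} crossed stubs and hence needs $k\ge 2$. A spine vertex that must receive a single new leaf therefore lies outside the framework, and this is unavoidable for every $n$ (take $a=2$, whence $b=n-4$). I would split into regimes: a vertex needing at least $5$ leaves is handled by Lemma~\ref{le:leaf-addition}; one needing $2,3$ or $4$ by explicit small analogues of the gadgets of Fig.~\ref{fi:gadgets-parallel}, a finite family checked by hand, each subdividing its two stubs and killing the corresponding parallel copies while keeping every edge with at most one crossing; and one needing exactly $1$ leaf by a direct local move that subdivides a single simple edge near the vertex and joins the new vertex to it by a tree edge. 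In each case the base has to be tuned so that the $a-1$ and $b-1$ subdivisions performed cover all three excess copies.

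This budget is the only global constraint, and it fails exactly once: when $n=6$ we have $a=b=2$, giving only $a-1+b-1=2$ subdivisions against $3$ excess copies, so the four-vertex base is unusable. There, however, $|E(P_1)|+|E(P_2)|+|E(T)|=15=\binom{6}{2}$ forces the packing to be $K_6$, which is $1$-planar, and a direct edge-decomposition of $K_6$ into the double star $T$ (with $v_1,v_2$ as the common endpoints of both paths) and two Hamiltonian paths exhibits it. Combining the base packing with the additions in the remaining cases $n\ge 7$ yields a $1$-planar drawing in which the subdivided $P'_1,P'_2$ are the spanning paths $P_1,P_2$ and the backbone with its added leaves is $T$, which completes the proof.
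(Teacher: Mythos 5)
Your necessity argument and the $K_6$ resolution of the case $n=6$ are sound (a decomposition of $K_6$ into $T$ and two Hamiltonian $v_1$--$v_2$ paths does exist, e.g.\ $v_1v_3xyv_0v_2$ and $v_1yv_3v_0xv_2$ in the obvious labelling). The gap is in the sufficiency argument for $n\ge 7$, where all of the difficult work is deferred. First, Lemma~\ref{le:leaf-addition} supplies gadgets only for $k\ge 5$, and your ``small analogues for $k\in\{2,3,4\}$'' are asserted rather than exhibited; over \emph{parallel} stubs they provably cannot exist for $k\le 3$, since two edge-disjoint $u$--$x$ paths each passing through all $k$ new vertices force every $w_i$ to have degree $4$ among its at most $k+1$ available neighbours, which for $k=3$ forces $u$ to have degree $3$ instead of $2$. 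So the base multigraph and its two cutting curves would have to be tuned so that every small addition meets only non-parallel stubs \emph{and} the curves still cover all three excess parallel copies; neither the base drawing nor this tuning is produced. Second, and more seriously, the $k=1$ ``local move'' is wrong as stated: a new vertex that subdivides a single edge of $P'_1$ and receives one tree edge to $v_i$ has degree $0$ in $P_2$, so $P_2$ is no longer spanning. Every added leaf must have degree at least one in \emph{each} of $P_1$, $P_2$ and $T$ --- this is precisely why the leaf-addition operation routes both replacement paths through all new vertices --- and the case of a spine vertex with exactly one extra leaf arises for every $n$, so it is not marginal. (Your budget count is also off: a leaf addition subdivides two stubs regardless of $k$, so the available subdivisions number two per spine vertex receiving leaves, not $(a-1)+(b-1)$; $n=6$ is special because no valid $k=1$ gadget exists, not because only two subdivisions are available.)

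For comparison, the paper does not route this theorem through the multigraph-plus-leaf-addition machinery at all. Writing $n_i$ for the number of leaves at spine vertex $v_i$ with $n_1\le n_2$, it rules out $n_1=1$ by the degree condition, gives three general constructions for $n_1\ge 5$ according to the parities of $n_1$ and $n_2$ (Fig.~\ref{fi:4-backbone-1}), and disposes of the twelve remaining combinations with $n_1\in\{2,3,4\}$ by explicit drawings (Fig.~\ref{fi:4-backbone}). To salvage your plan you would need to (i) exhibit the $9$-edge base multigraph together with cutting curves at $v_1$ and $v_2$ hitting one $P'_1$-edge and one $P'_2$-edge each while covering all three excess copies, (ii) supply concrete gadgets for $k\in\{1,2,3,4\}$ in which every new vertex lies on both replacement paths, and (iii) check the interaction between (i) and (ii); by that point you would essentially have redone the paper's case analysis.
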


Lemmas~\ref{le:5legged-6ormore} and~\ref{le:5legged-5}, together with Theorem~\ref{th:caterpillar-4} imply the next theorem.

\begin{theorem}\label{th:2paths-plus-caterpillar}
	Two paths and a 5-legged caterpillar $T$ with $n$ vertices admit a 1-planar packing if and only if $n \geq 6$ and $\deg_T(v) \leq n-3$ for every vertex $v$.
\end{theorem}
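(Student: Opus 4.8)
The plan is to prove the two implications separately: the necessity direction comes for free from Property~\ref{pr:one}, while the sufficiency direction is a case analysis on the number $n'$ of vertices of the backbone of $T$, designed so that each of the three preceding results is invoked in exactly one range of $n'$.

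For necessity, since $P_1$, $P_2$, and $T$ are three connected $n$-vertex graphs, Property~\ref{pr:one} applied with $k=3$ yields $n \geq 2k = 6$ and $\deg_{G_i}(v) \leq n-k = n-3$ for every vertex $v$; specializing the latter bound to $T$ gives $\deg_T(v) \leq n-3$. Hence both stated conditions are necessary and nothing more is required in this direction.

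For sufficiency, I would assume $n \geq 6$ and $\deg_T(v) \leq n-3$ for all $v$, and write the backbone of $T$ as $v_0, v_1, \dots, v_k, v_{k+1}$, so that $n' = k+2 \geq 3$ (the spine $v_1,\dots,v_k$ is nonempty). First I would discard the case $n'=3$: then the spine is the single vertex $v_1$, so $T$ is a star with $\deg_T(v_1) = n-1 > n-3$, contradicting the hypothesis; thus $n' \geq 4$. I then split into three cases matching the three results. If $n' \geq 6$, Lemma~\ref{le:5legged-6ormore} directly gives a 1-planar packing. If $n' = 5$, I would first note that $T$ cannot be a path: otherwise $T$ would coincide with its 5-vertex backbone and $n = 5 < 6$, contradicting $n \geq 6$; hence some spine vertex carries leaves and Lemma~\ref{le:5legged-5} applies. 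If $n' = 4$, then $T$ is a caterpillar with a 4-vertex backbone satisfying $n \geq 6$ and $\deg_T(v) \leq n-3$, so Theorem~\ref{th:caterpillar-4} supplies the packing. Combining the three cases proves sufficiency and hence the characterization.

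The step I expect to require the most care --- and which is the real content of the gluing rather than a routine invocation --- is confirming that the case split is sound even though Lemmas~\ref{le:5legged-6ormore} and~\ref{le:5legged-5} carry no degree hypothesis, whereas the characterization does. To this end I would check that the degree bound is automatic whenever $n' \geq 5$: the maximum degree of a vertex of a 5-legged caterpillar is attained when all $n - n'$ non-backbone vertices are leaves of a single spine vertex, and then that vertex has degree at most $2 + (n - n')$, which is $\leq n-3$ exactly because $n' \geq 5$. Consequently, for $n' \geq 5$ no 5-legged caterpillar can violate $\deg_T(v) \leq n-3$, so applying the two lemmas in those ranges is legitimate and loses no generality; the degree hypothesis becomes binding only at $n' = 4$, which is precisely where Theorem~\ref{th:caterpillar-4} states it explicitly. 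This reconciles the unconditional lemmas with the conditional characterization and closes the argument.
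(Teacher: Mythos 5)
Your proposal is correct and follows essentially the same route as the paper, which derives the theorem directly from Lemmas~\ref{le:5legged-6ormore} and~\ref{le:5legged-5} and Theorem~\ref{th:caterpillar-4} (with necessity from Property~\ref{pr:one}); your case split on the backbone length $n'$ and your check that the degree bound is automatic for $n' \geq 5$ just make explicit the gluing the paper leaves implicit.
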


\section{1-planar Packings with Constant Edge Crossings}\label{se:3-paths}

The technique described in the previous section constructs 1-planar drawings that have a linear number of crossings. A natural question is whether it is possible to compute a 1-planar packing with a constant number of crossings. In this section we prove that seven (resp. fourteen) crossings suffice for packing three paths (resp. cycles). It is worth remarking that a 1-planar packing of three paths has at least three crossings (because it has $3n-3$ edges), while a 1-planar packing of three cycles has at least six crossings (because it has $3n$ edges).

\begin{theorem}\label{th:3-paths}
  Three paths with $n\geq 6$ vertices can be packed into a 1-plane graph with at most $7$ edge crossings. 
%  Also, if $n < 6$, a triple of paths does not admit a 1-planar packing.
\end{theorem}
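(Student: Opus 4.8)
The plan is to exhibit an explicit 1-plane drawing in which the three paths $P_1, P_2, P_3$ run as three almost-parallel strands, so that all crossings are confined to two constant-size gadgets and do not grow with $n$. Concretely, I would place the $n$ vertices in a long horizontal arrangement and draw a periodic, crossing-free \emph{corridor} in the middle, where the three paths proceed side by side using disjoint edge sets; only near the two ends would the strands be interleaved, both to guarantee that each path is spanning and to guarantee pairwise edge-disjointness. Since all crossings live in the two end gadgets, their number stays bounded by a constant, and the target bound of $7$ is obtained by optimizing these gadgets (the known lower bound for three paths is $3$, so the gadgets are allowed a little slack in order to be handled uniformly). A natural starting point is the three-path packing already used in Section~\ref{se:2-paths+cater} (the construction underlying Lemma~\ref{le:5legged-6ormore}), re-laid-out so that its previously linear family of crossings is replaced by parallel, non-crossing runs.

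The steps I would carry out are: (1) fix a base drawing for one residue class of $n$ modulo the period of the corridor, display it, and verify by inspection that its three subgraphs are edge-disjoint Hamiltonian paths, that every edge is crossed at most once, and that the total number of crossings is at most $7$; (2) show that increasing $n$ within the same residue class only lengthens the corridor by inserting additional parallel pieces inside a crossing-free region, so that neither 1-planarity nor the crossing count is affected; (3) handle the remaining residue classes by small, constant-size modifications at one end of the corridor, checking that each modification keeps the crossing count at most $7$ and introduces no edge with two crossings; and (4) dispose of the few smallest admissible values by direct ad-hoc drawings. Edge-disjointness and the spanning property are maintained by construction, while 1-planarity is verified locally: the corridor is planar, and each gadget is checked to be a 1-plane drawing whose crossed edges are internal to the gadget, so that they cannot acquire a second crossing where the gadget is attached to the corridor.

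The main obstacle is to design the end gadgets so that the three strands can be woven together into three spanning, pairwise edge-disjoint paths while simultaneously (i) keeping the total number of crossings at the constant $7$ rather than letting it grow, and (ii) ensuring that no edge is crossed twice at the interface between a gadget and the corridor. These two requirements interact delicately across the different residues of $n$: a single poorly placed connector can either force an extra crossing or destroy edge-disjointness, so the bulk of the work is a careful, case-by-case verification that the gadgets achieve spanning, edge-disjointness, and the crossing bound at once. By contrast, the corridor is routine, being planar and periodic, and its correctness follows immediately once the base pattern is fixed. Finally, I would observe that $n \geq 6$ is exactly the range permitted by Property~\ref{pr:one}, so no smaller case needs to be treated.
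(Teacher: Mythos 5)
Your overall strategy --- confine all crossings to constant-size gadgets, keep the growing part of the drawing planar and periodic, split on the residue of $n$, and treat the smallest cases ad hoc --- is exactly the strategy of the paper's proof. The gap is in the one step you dismiss as routine: the corridor. If the three paths \emph{proceed side by side using disjoint edge sets} and the strands are interleaved only near the two ends, then inside the corridor each path touches only the vertices of its own strand and therefore misses a linear number of vertices. A spanning path on $n$ vertices has $n-1$ edges, so each path would need roughly $2n/3$ further edges reaching into the other two strands, and no constant-size end gadget can supply these. The interleaving that makes each path spanning must happen throughout the periodic region, not only at its ends; designing a periodic \emph{planar} pattern in which three edge-disjoint paths each already visit every vertex is precisely the nontrivial content of the theorem, and your proposal does not exhibit it.

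The paper resolves exactly this point. The $3k$ periodic vertices are placed on three rays $r_0,r_1,r_2$ emanating from a common origin at $120^\circ$ angles; path $P_i$ consists of a zig-zag between rays $r_i$ and $r_{i+1}$ together with a straight run along $r_{i+2}$, so each path spans all ray vertices using two planar sub-paths, and the three paths fit together without crossings because each angular sector hosts exactly one zig-zag and each ray exactly one straight run. A fixed $7$-vertex three-path packing with $3$ crossings is placed at the origin and joined to the rays by $6$ connector edges contributing $3$ more crossings; the residues of $n$ modulo $3$ are handled by appending one or two vertices outside the outer triangle, costing at most one further crossing, for a total of at most $7$ (with $n=6$ done by a separate small drawing). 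Some construction of this kind --- a periodic pattern in which every path is already spanning --- is what your step (1) must actually produce before steps (2)--(4) can be carried out.
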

\begin{proof}
We prove the statement by showing how to construct a 1-planar drawing  with at most $7$ crossings of a graph that is the union of three paths. Suppose first that $n=7+3k$ for $k \in \mathbb{N}$. If $k=0$, we draw the union of the three paths with $7$ vertices as shown in Fig.~\ref{fi:5-legged-2-a}. The drawing is 1-planar and has three crossings in total. Suppose now that $k > 0$. We consider three rays $r_0,r_1,r_2$ with a common origin pairwise forming a $120^\circ$ angle and we place $k$ vertices on each line. We denote by $u_{i,1},u_{i,2},\dots,u_{i,k}$ the vertices of line $r_i$ ($i=0,1,2$) in the order they appear along $r_i$ starting from the origin (see Fig.~\ref{fi:paths-few-crossings-a}). In the following, indices will be taken modulo 3 when working with the indices of the rays $r_i$. To draw path $P_i$ ($i=0,1,2$) we draw the edges $(u_{i,1},u_{i+1,1})$, $(u_{i,j},u_{i+1,j-1})$, and $(u_{i,j},u_{i+1,j})$ (for $j=2,\dots,k$) as straight-line segments. Notice that, these edges form a zig-zagging path between the vertices of rays $r_i$ and $r_{i+1}$, so $P_i$ passes through all vertices of $r_i$ and $r_{i+1}$ but not through the vertices of $r_{i+2}$. To include these missing vertices in $P_i$, we add to $P_i$ edges $(u_{i+2,j},u_{i+2,j+1})$ (for $j=1,2,\dots,k-1$).
%we draw as edges of $P_i$ the straight-line edges $(u_{i+2,j},u_{i+2,j+1})$ (for $j=1,2,\dots,k-1$).
%
\begin{figure}[tbp]
	\centering
	\subfigure[]{\includegraphics[width=0.32\textwidth, page=1]{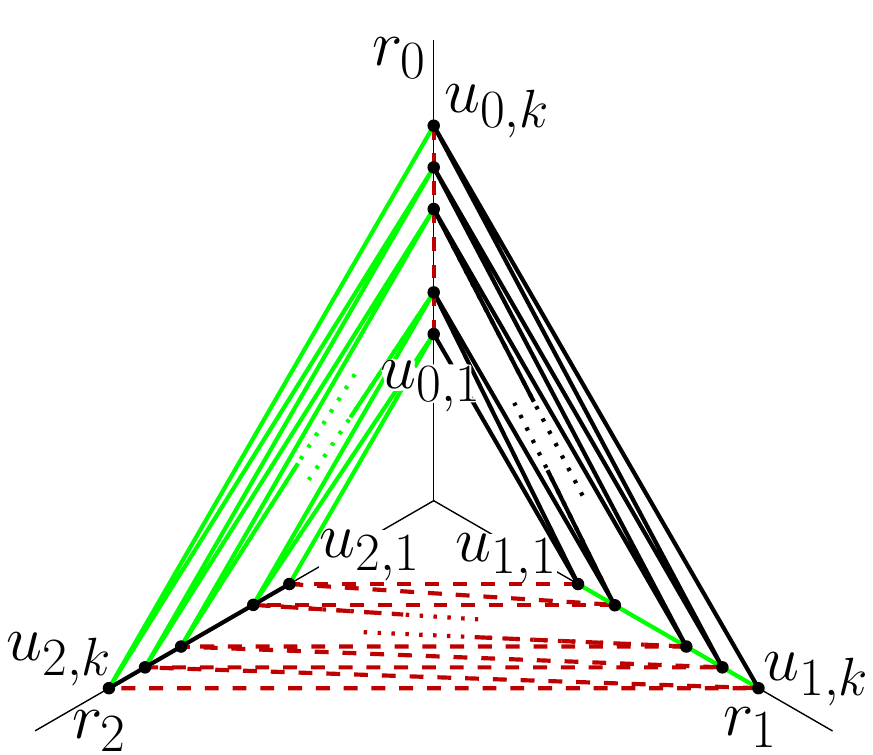}\label{fi:paths-few-crossings-a}}
	\hfil
	\subfigure[]{\includegraphics[width=0.32\textwidth, page=2]{paths-few-crossings.pdf}\label{fi:paths-few-crossings-b}}
	\hfil
	\subfigure[]{\includegraphics[width=0.32\textwidth, page=4]{paths-few-crossings.pdf}\label{fi:paths-few-crossings-c}}
	\caption{Illustration for the proof of Theorem~\ref{th:3-paths}.}\label{fi:paths-few-crossings}
\end{figure}
In this way we draw two disjoint sub-paths for each path $P_i$, namely a zig-zagging path between $r_i$ and $r_{i+1}$ and a straight-line path along $r_{i+2}$. Moreover, we only draw $3k$ edges and therefore there are still $7$ missing vertices (and $8$ missing edges) in each path. To add the missing vertices and edges and to connect the two sub-paths of each path, we construct a drawing $\Gamma_0$ of three paths $P'_0,P'_1,P'_2$ with seven vertices as in the case when $k=0$. Denote with $v_i$ and $w_i$ the end-vertices of $P'_i$ in $\Gamma_0$. We place $\Gamma_0$ inside the triangle $u_{0,1}, u_{1,1}, u_{2,1}$ and add the edges $(v_i,u_{i,1})$ and $(w_i,u_{i+2,1})$. It is easy to see (see also Fig.~\ref{fi:paths-few-crossings-b}) that these six edges can be added so that the drawing is still 1-planar and so that the total number of crossings is $6$. This concludes the proof for $n=7+3k$.
If $n=7+3k+1$ we start with the same construction as in the previous case and then add an extra vertex $v$ outside the triangle $u_{1,k}, u_{2,k}, u_{3,k}$. Notice that each of these three vertices is the end-vertex of two of the three paths with $7+3k$ vertices. Thus we can extend each path to include $v$ by connecting it to each of the three vertices $u_{1,k}, u_{2,k}, u_{3,k}$ in a planar way (see Fig.~\ref{fi:paths-few-crossings-c} ignoring vertex $w$). If $n=7+3k+2$, then we add two extra vertices outside the triangle $u_{0,k}, u_{1,k}, u_{2,k}$ and connect both of them to the three vertices $u_{0,k}, u_{1,k}, u_{2,k}$ (recall that each of these three vertices is the end-vertex of two distinct paths with $7+3k$ vertices). In this case however the addition of the two extra vertices causes the creation of one crossing. Thus the final drawing is 1-planar and the total number of crossings is at most $7$ (see Fig.~\ref{fi:paths-few-crossings-c}). This concludes the proof for $n \geq 7$. If $n=6$ we construct a  1-planar packing of three paths with three crossings in total as shown in Fig.~\ref{fi:5-legged-2-b}.
% To conclude the proof it is immediate to see that if $n<6$ a 1-planar packing cannot exist. Namely, if $n <6$, we have $3n-3 > \frac{n(n-1)}{2}$ and therefore three paths with $n$ vertices have more edges than those that a graph with the same number of vertices can have.  
\end{proof}

The construction of Theorem~\ref{th:3-paths} can be extended to three cycles.
\begin{restatable}{theorem}{threecycles}
	\label{th:3-cycles}
	Three cycles with $n\geq 20$ vertices can be packed into a 1-plane graph with at most $14$ edge crossings.
\end{restatable}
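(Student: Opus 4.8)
The plan is to extend the three-path construction of Theorem~\ref{th:3-paths} by closing each of its three spanning paths into a spanning cycle. Recall that in that drawing each path $P_i$ is the union of a zig-zag subpath in the wedge between rays $r_i$ and $r_{i+1}$ and a straight subpath along ray $r_{i+2}$, and that the inner ends $u_{i,1}$ and $u_{i+2,1}$ of these two subpaths are joined through a central copy $\Gamma_0$ of the $7$-vertex, $3$-crossing packing of Fig.~\ref{fi:5-legged-2-a}. Consequently the only vertices of degree one are the three outer corners $u_{0,k}$, $u_{1,k}$, $u_{2,k}$, each of which is simultaneously the outer end of one zig-zag subpath and of one straight subpath. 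To turn $P_i$ into a cycle $C_i$ I would link its outer zig-zag end $u_{i+1,k}$ to its outer straight end $u_{i+2,k}$.

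The obstacle is that the natural closing edge $(u_{i+1,k},u_{i+2,k})$ coincides with an edge already in the drawing: it is exactly the last zig-zag edge of $P_{i+1}$, so adding it would create parallel edges and violate edge-disjointness. To circumvent this I would insert a second copy $\Gamma_0'$ of the central gadget into the outer (unbounded) face, outside the triangle $u_{0,k}u_{1,k}u_{2,k}$, and route the closing connections through it: if $Q_i$ denotes the internal path of $\Gamma_0'$ assigned to $C_i$, I attach one endpoint of $Q_i$ to $u_{i+1,k}$ and the other to $u_{i+2,k}$. Since these attaching edges run to the fresh gadget vertices rather than directly between two existing corners, they are edge-disjoint from the outer zig-zag edges. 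A direct incidence count shows that each outer corner $u_{j,k}$ receives exactly two such attaching edges, one from $Q_{j-1}$ and one from $Q_{j+1}$, precisely raising its degree to two in each of the two cycles for which it was formerly a path endpoint; hence $C_0$, $C_1$, $C_2$ are edge-disjoint spanning cycles.

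For the drawing I would realize $\Gamma_0'$ and its six attaching edges as a mirror image of $\Gamma_0$ and its attaching edges in the proof of Theorem~\ref{th:3-paths}: on the sphere the outer face is equivalent to the inner triangular region $u_{0,1}u_{1,1}u_{2,1}$, so the very same routing keeps every edge with at most one crossing. Counting crossings, the inner gadget with its attaching edges contributes at most $7$ crossings exactly as in Theorem~\ref{th:3-paths}, the outer gadget with its attaching edges contributes at most $7$ more, and the zig-zag and straight subpaths remain crossing-free; the total is therefore at most $14$.

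It remains to account for the number of vertices. The construction spends $7$ vertices on $\Gamma_0$, $7$ on $\Gamma_0'$, and $3k$ on the rays, so $n=14+3k$, and requiring $k\ge 2$ (so that the two gadgets have disjoint attachment regions and the subpaths are long enough to separate them) yields $n\ge 20$ and covers all $n\equiv 2\pmod 3$. The remaining residues $n\equiv 0,1\pmod 3$ are obtained by inserting one or two additional vertices near an outer corner, as in Fig.~\ref{fi:paths-few-crossings-c}, which adds only a constant number of crossings already absorbed by the bound of $14$. I expect the crux of the argument to be exhibiting one explicit $1$-planar drawing of $\Gamma_0'$ together with its attaching edges in the outer face that simultaneously respects $1$-planarity, the prescribed incidences at the three outer corners, and edge-disjointness from the outer zig-zag edges; once this single gadget is drawn, the extension to all $n\ge 20$ and the crossing count follow routinely.
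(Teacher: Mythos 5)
Your main construction for $n \equiv 2 \pmod 3$ is sound and is in fact a close relative of the paper's: the paper builds two full copies $G_1,G_2$ of the three-path packing (on $7+3k$ and $7+3h$ vertices, $h,k\ge 1$), embeds each with all path endpoints on its outer face, and joins the endpoint pairs with six edges realizing two crossings, for a total of $6+6+2=14$; your version effectively replaces $G_2$ by the degenerate seven-vertex instance, i.e.\ a second copy of the gadget placed in the outer face of a single ray structure. Since the outer face of the ray structure is bounded by the triangle of last zig-zag edges exactly as the inner face is bounded by the first ones, your mirror-image argument for placing $\Gamma_0'$ is legitimate, and your incidence count correctly shows each outer corner reaching degree two in both cycles for which it was formerly a path endpoint. (Your crossing count of $7+7$ is loose --- the relevant subcase of Theorem~\ref{th:3-paths} gives six crossings per gadget plus attachments, so you actually get $12$ --- but this only helps.)

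The gap is in the residues $n\equiv 0,1\pmod 3$. You propose to insert one or two additional vertices near an outer corner ``as in Fig.~\ref{fi:paths-few-crossings-c}'', but the mechanism of that figure attaches a new vertex of degree three to the three outer corners so that it becomes a new \emph{endpoint} of each path. Once the paths have been closed into cycles there are no endpoints left: an added vertex must have degree two in each of the three cycles (degree six in total) and must be spliced in by subdividing one edge of each cycle, which also means removing or rerouting three existing edges. That is a different operation from the one in the figure you cite, and it is precisely where the paper does something you skipped: it builds the sub-packings on $n-1$ or $n-2$ vertices and absorbs the leftover vertices into the six \emph{connecting} edges (Figs.~\ref{fi:cycles-b} and~\ref{fi:cycles-c}), so the splicing happens on edges that are drawn fresh anyway. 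Your construction can be repaired in the same spirit --- route one attaching edge of each cycle through each extra vertex --- but as written the residue cases do not follow, and the claim that the extra crossings are ``already absorbed'' by the bound of $14$ is unsupported without an explicit drawing.
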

\section{From Triples to Quadruples}\label{se:quadruples}
In this section we extend the study of 1-planar packings from triples of graphs to quadruples of graphs. By Property~\ref{pr:one}, a 1-planar packing of four graphs does not exist if all graphs are connected, because the number of edges of the four graphs is higher than the number of edges allowed in a 1-planar graph. We consider therefore a quadruple consisting of three paths and a perfect matching. Notice that, in this case the number of vertices $n$ has to be even.
\begin{theorem}\label{th:3-paths-plus-matching}
Three paths and a perfect matching with $n \geq 12$ vertices admit a 1-planar packing. If $n \leq 10$, the quadruple does not admit a 1-planar packing. 
\end{theorem}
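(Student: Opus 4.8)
The statement splits into two directions, which I would treat separately. Together the four graphs have $3(n-1)+n/2=\tfrac{7n}{2}-3$ edges, while a $1$-planar graph on $n$ vertices has at most $4n-8$ edges \cite{DBLP:journals/combinatorica/PachT97}. Since $\tfrac{7n}{2}-3>4n-8$ exactly when $n<10$, for every even $n\le 8$ the four graphs simply have too many edges to lie inside any $1$-planar graph, so no packing can exist; this settles all cases $n\le 8$ by counting alone.

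The delicate case of the negative direction is $n=10$, where $\tfrac{7n}{2}-3=32=4n-8$. Here a packing $G$, if it existed, would be an \emph{optimal} $1$-planar graph, i.e.\ one meeting the edge bound with equality, and I would invoke the known structure of such graphs (see, e.g., \cite{KOBOUROV201749}): $G$ is obtained from a $3$-connected quadrangulation $Q$ by inserting the two crossing diagonals into every quadrilateral face. In a $2$-connected plane graph the number of faces incident to a vertex $v$ equals $\deg_Q(v)$, and each such (quadrilateral) face contributes exactly one diagonal at $v$; thus $v$ is incident to $\deg_Q(v)$ skeleton edges and to $\deg_Q(v)$ diagonals, so $\deg_G(v)=2\deg_Q(v)$ is \emph{even} for every $v$. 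On the other hand, in any packing $\deg_G(v)=\deg_{P_1}(v)+\deg_{P_2}(v)+\deg_{P_3}(v)+\deg_M(v)=(6-j_v)+1=7-j_v$, where $j_v\in\{0,1,2,3\}$ is the number of paths in which $v$ is an endpoint. Evenness forces $j_v$ odd, hence $j_v\ge 1$ for \emph{every} vertex, so $\sum_v j_v\ge n=10$; but three paths have only six endpoints in total, i.e.\ $\sum_v j_v=6$, a contradiction. This rules out $n=10$ and completes the negative direction, the crux being the even-degree property of optimal $1$-planar graphs.

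For the positive direction ($n\ge 12$ even) I would give an explicit construction rather than an existence argument, since the edge count $\tfrac{7n}{2}-3$ lies only $\tfrac{n}{2}-5\ge 1$ below the maximum $4n-8$ and the drawing must therefore be dense. The plan is to fix a base packing for $n=12$, drawn explicitly with the three Hamiltonian paths and the six matching edges in distinct styles and each edge crossed at most once, and then to supply an insertion gadget that turns a packing on $n$ vertices into one on $n+2$ vertices: the two new vertices are spliced into each of the three paths and matched to each other, and the gadget is confined to a single face of the current drawing so that $1$-planarity is preserved. One convenient way to organise the base drawing is to begin from the few-crossing packing of three paths of Theorem~\ref{th:3-paths} and then route the $n/2$ matching edges through its faces, each matching edge crossing at most one previously uncrossed path edge and no other matching edge; as that drawing has only a constant number of crossings, nearly all of its $3n-3$ edges are still available to be crossed once.

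The main obstacle is precisely this simultaneous realisation: the four subgraphs must be edge-disjoint and each spanning, the target degree sequence must be met (in the simplest scheme $n-6$ vertices of degree $7$ and six of degree $6$), and yet the union must remain $1$-planar while carrying almost the maximum number of edges. The delicate point is to guarantee that the pairing can always be completed to a \emph{perfect} matching compatible with the crossing pattern---without stranding a vertex or being forced to cross some edge twice---and that the insertion gadget preserves this property as $n$ grows by $2$. This parity-sensitive bookkeeping, mirroring the obstruction found at $n=10$, is exactly what fixes the threshold at $n=12$ and is where the construction must be verified most carefully.
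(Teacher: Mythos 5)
Your negative direction is sound. The counting argument for $n\le 8$ is the standard one, and for $n=10$ your parity argument via the quadrangulation structure of optimal 1-planar graphs (every vertex has degree $2\deg_Q(v)$, hence even, while a packing forces degree $7-j_v$ with only six path-endpoints to distribute over ten vertices) is a valid and arguably cleaner alternative to the paper's route, which instead invokes the theorem of Brandenburg that every optimal 1-planar graph has at least eight vertices of degree exactly six and plays it against the fact that a packing has at most six vertices of degree below seven. Both arguments lean on nontrivial structural results about optimal 1-planar graphs, so neither is more elementary; yours trades one citation for another.

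The positive direction, however, is a plan rather than a proof, and the gap sits exactly where you locate it yourself. You never exhibit the base packing for $n=12$, never specify the insertion gadget, and never verify that the matching edges can be routed through the faces of the drawing from Theorem~\ref{th:3-paths}. The insertion gadget is particularly problematic: the two new vertices $x,y$ must each reach degree seven, and if they are matched to each other they cannot also be consecutive on any of the three paths (that would duplicate the edge $(x,y)$), so each must be spliced into a separate edge of each path; this forces $x$ and $y$ to see twelve distinct existing vertices from within a single face of a drawing that already carries close to $4n-8$ edges, and it is not clear that such a face exists, let alone that one can be maintained inductively after each insertion. The paper does not attempt an inductive insertion at all: it builds, for each residue of $n$ modulo $8$ with $n\ge 24$ (and by explicit figures for $12\le n\le 22$), a concrete 1-planar host of minimum degree at least seven from nested octagons with a crossing pair in each quadrilateral face, and then exhibits an explicit partition of its edges into three Hamiltonian paths, a perfect matching, and a discard set. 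Until you supply the base drawing and a working gadget (or an explicit global construction of this kind), the statement for $n\ge 12$ remains unproved.
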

\begin{proof}
	\begin{figure}[htb]
		\centering
		\subfigure[]{\includegraphics[width=0.32\textwidth, page=1]{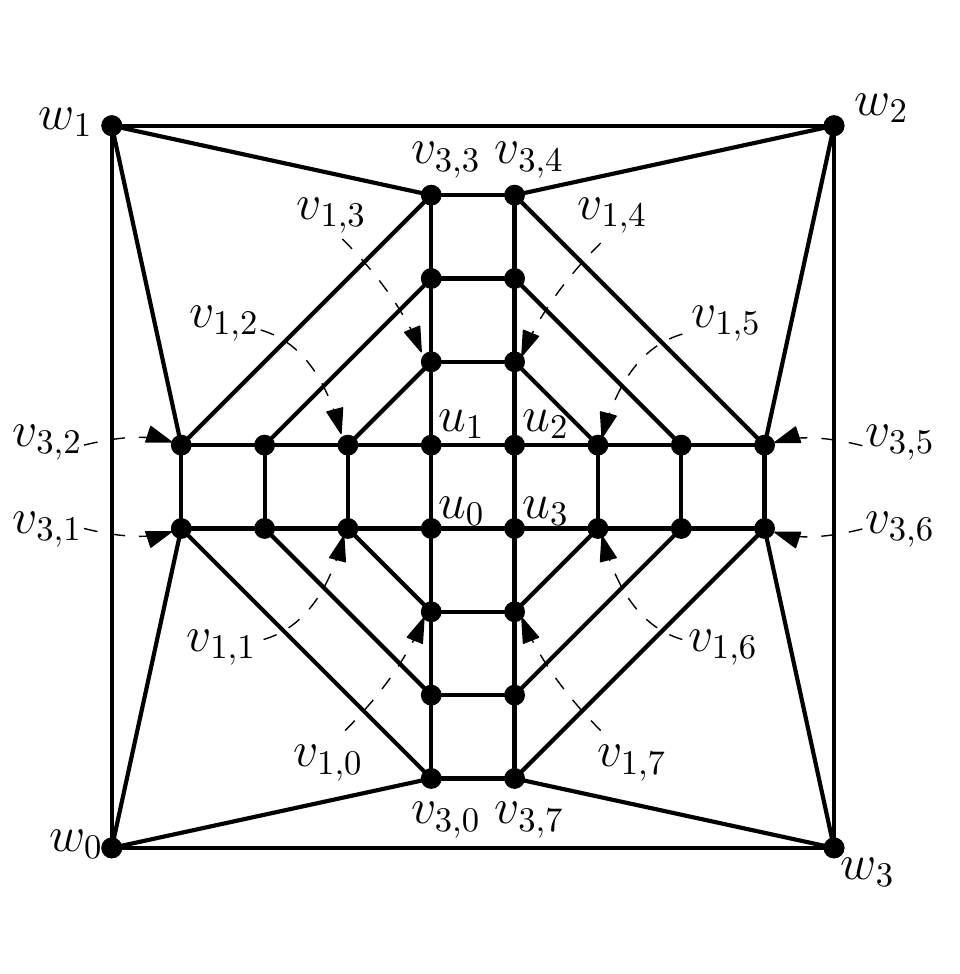}\label{fi:matching2-a}}
		\hfil
		\subfigure[$n=8k$]{\includegraphics[width=0.32\textwidth, page=2]{matching2.pdf}\label{fi:matching2-b}}
		\hfil
		\subfigure[$n=8k+2$]{\includegraphics[width=0.32\textwidth, page=3]{matching2.pdf}\label{fi:matching2-c}}
		\hfil
		\subfigure[$n=8k+4$]{\includegraphics[width=0.32\textwidth, page=4]{matching2.pdf}\label{fi:matching2-d}}
		\hfil
		\subfigure[$n=8k+6$]{\includegraphics[width=0.32\textwidth, page=5]{matching2.pdf}\label{fi:matching2-e}}
		\caption{(a) Graph $G'$ used in the proof of Theorem~\ref{th:3-paths-plus-matching} $(n=8k, k=3)$. (b)--(e) 1-planar packings of three paths and a perfect matching obtained starting~from~$G'$.}\label{fi:matching2}
	\end{figure}
Three paths and a perfect matching have a total of $3(n-1)+\frac{n}{2}=\frac{7n}{2}-3$ edges. Since a 1-planar graph has at most $4n-8$ edges, a 1-planar packing of three paths and a perfect matching exists only if $\frac{7n}{2}-3 \leq 4n-8$, i.e., if $n \geq 10$. If $n=10$, we have $\frac{7n}{2}-3=32$ and $4n-8=32$, which means that any 1-planar packing of three paths and a perfect matching with $n=10$ vertices is an optimal 1-planar graph. It is known that every optimal 1-planar graph has at least eight vertices of degree exactly six~\cite{DBLP:journals/algorithmica/Brandenburg18}. On the other hand, in any 1-planar packing of three paths and a perfect matching all vertices, except the at most six end-vertices of the three paths, have degree seven, which implies that a 1-planar packing of three paths and a perfect matching does not exist.\\
\indent We now prove that a 1-planar packing exists if $n \geq 12$. We only discuss here the case when $n \geq 24$; the cases in which $12 \le n \le 22$ are described in the appendix.
%We now prove that a 1-planar packing exists if $n \geq 24$.
%We now consider the case in which $n \ge 24$.
Based on the fact that in any 1-planar packing of three paths and a perfect matching at least $n - 6$ vertices have degree seven, we
construct the desired 1-planar packing starting from a 1-planar graph $G$ such that at least $n - 6$ vertices have degree at least seven; we then partition the edges of $G$ into five sets; three of these sets form a spanning path each, the fourth one forms a perfect matching, and the fifth one contains edges that will not be part of the 1-planar packing. For every $n = 8k$ and $k\geq 3$ it is possible to construct a 1-planar graph with $n$ vertices each having degree at least seven as follows. We start with $k-1$ cycles $C_1,C_2,\dots,C_{k-1}$. Each 
cycle $C_i$ ($1 \leq i \leq k-1$) has eight vertices $v_{i,j}$ with $0 \leq j \leq 7$. Cycle $C_i$, for $1 \leq i \leq k-2$, is embedded inside cycle $C_{i+1}$ and is connected to it with edges $(v_{i,j},v_{i+1,j})$ for each $0 \leq j \leq 7$. We have a cycle with four vertices $u_0, u_1, u_2, u_3$ embedded inside $C_1$ and connected to it with edges $(u_j,v_{1,2j})$ and $(u_j,v_{1,2j+1})$. Finally, we have a cycle with four vertices $w_0, w_1, w_2, w_3$ embedded outside $C_{k-1}$ and connected to it with edges $(w_j,v_{k-1,2j})$ and $(w_j,v_{k-1,2j+1})$. The graph $G'$ described so far has $n$ vertices, is planar, all its vertices have degree four, and each vertex is incident to at most one face of size three (see Fig.~\ref{fi:matching2-a}). By adding two crossing edges inside each face of size four, we obtain a 1-planar graph $G$ with $n$ vertices where each vertex has degree at least seven. The graph $G$ and the partition of the edges of $G$ in five sets defining three paths and a matching is shown in Fig.~\ref{fi:matching2-b}. If $n$ is not a multiple of $8$, then it will be $n=8k+r$, with $0<r<8$ and $r$ even (because $n$ is even). In this case we construct $G'$ as explained above and then we extend the paths $u_0, v_{1,1}, \dots,v_{k-1,1}$ and $u_1, v_{1,2}, \dots,v_{k-1,2}$ to the left with $1$, $2$ or $3$ vertices each; we then suitably rearrange the edges of $G'$.
%add some new edges and remove some of the edges of $G'$.
The graph $G$ is then obtained, as in the previous case, by adding a pair of crossing edges inside each face of size four. The resulting graph $G$ and a partition of its edges in five sets defining three paths and a matching is shown in Figs.~\ref{fi:matching2-c}, \ref{fi:matching2-d}, and~\ref{fi:matching2-e}, for the cases when $r=2$, $r=4$, and $r=6$, respectively. 
%(the bigger white vertices are the additional vertices). 
%Notice that, in this cases, there are either two or four vertices of degree six.
\end{proof}

\section{Open Problems}\label{se:open}
We find that the 1-planar packing problem is a fertile and still largely unexplored research subject. We conclude the paper with a list of open problems.
\begin{inparaenum}[(i)]
\item Theorem~\ref{th:no-path-plus-2-caterpillars} holds only for $n=7$. Do two caterpillars (or more general trees) and a path admit a 1-planar packing if they have more than $7$ vertices?  
%Theorems~\ref{th:no-3-trees} and~\ref{th:no-path-plus-2-caterpillars} show that not all triples admit a 1-planar packing if at most one of the three trees is a path. This motivated us to study of triples when two of the trees are paths. On the other hand, however, the result of Theorem~\ref{th:no-path-plus-2-caterpillars} holds only for $n=7$. Thus it is natural to ask whether two caterpillars, or even two more complex trees, and a path can always be packed if they have more than $7$ vertices.  
%
\item Can Theorem~\ref{th:2paths-plus-caterpillar} be extended to general caterpillars? What about two paths and a tree more complex than a caterpillar, for example a binary tree?
%In Section~\ref{se:2-paths+cater} we proved that two paths and a 5-legged caterpillar can always be packed (provided that they satisfy Property~\ref{pr:one}). A natural open problem is to extend Theorem~\ref{th:2paths-plus-caterpillar} to general caterpillars. It is worth remarking that our technique, based on the leaf addition operation, cannot be extended to work with less than five leaves. Namely, in a gadget with at most three leaves, there are not enough edges between the leaves to route the edges of the two paths. If the number of leaves is four the number of edges available is exactly equal to the number of edges needed, but it can be seen that two of the four dangling edges will have a crossing, which makes the leaf addition operation not work for specific instances.  
%
\item Is it possible to compute a 1-planar packing of three paths or cycles with the minimum number of crossings (three and six, respectively)? Can we compute 1-planar packings with few crossings for triples of other types of trees?
\end{inparaenum}
%
%We conclude this section by pointing at the more general research direction of extending the packing problem to other families of beyond planar graphs~\cite{DBLP:journals/csur/DidimoLM19}.
%
%\newpage
%
\bibliography{biblio}
\bibliographystyle{splncs04}

\newpage
\appendix

\section*{Appendix}\label{ap:appendix}

\section{Additional Material for Section~\ref{se:preliminaries}}

\one*
\begin{proof}
	If each $G_i$ is connected, then it has at least $n-1$ edges and therefore any packing of  $G_1,\dots, G_k$  has at least $k(n-1)$ edges; since the complete graph with $n$ vertices has $\frac{n(n-1)}{2}$ edges it must be $k(n-1) \leq \frac{n(n-1)}{2}$, that is $n \geq 2k$. On the other hand, a 1-planar graph has at most $4n-8$ edges, and therefore it must be $k(n-1) \leq 4n-8$, which implies $k \leq 3$. Moreover, if each $G_i$ is connected $\deg_{G_i}(v) \geq 1$ for each $v$ and since $\sum_{i=1}^k \deg_{G_i}(v) \leq n-1$, it must be $\deg_{G_i}(v) \leq n-k$.
\end{proof}

\section{Additional Material for Section~\ref{se:2-paths+cater}}\label{ap:2-paths+cater}
%
%\begin{figure}[h]
%	\centering
%	\subfigure[]{\includegraphics[width=0.24\textwidth, page=9]{figures/5-legged-1.pdf}\label{fi:5legged-8-b}}
%	\hfill
%	\subfigure[]{\includegraphics[width=0.24\textwidth, page=10]{figures/5-legged-1.pdf}\label{fi:5legged-8-c}}
%	\hfill
%	\subfigure[]{\includegraphics[width=0.24\textwidth, page=11]{figures/5-legged-1.pdf}\label{fi:5legged-8-d}}
%	\hfill
%	\subfigure[]{\includegraphics[width=0.24\textwidth, page=12]{figures/5-legged-1.pdf}\label{fi:5legged-8-e}}
%	\caption{1-planar packings of three paths with $n'$ vertices, with a cutting curve (dashed) for each internal vertex of the black path. \subref{fi:5legged-8-b} $n'$=$8$; \subref{fi:5legged-8-c} $n'$=$9$; \subref{fi:5legged-8-d} $n'$=$10$; \subref{fi:5legged-8-e} $n'$=$11$.}\label{fi:5legged-8}
%\end{figure}
%
\begin{figure}[htb]
	\centering
	\subfigure[]{\includegraphics[width=0.25\textwidth, page=1]{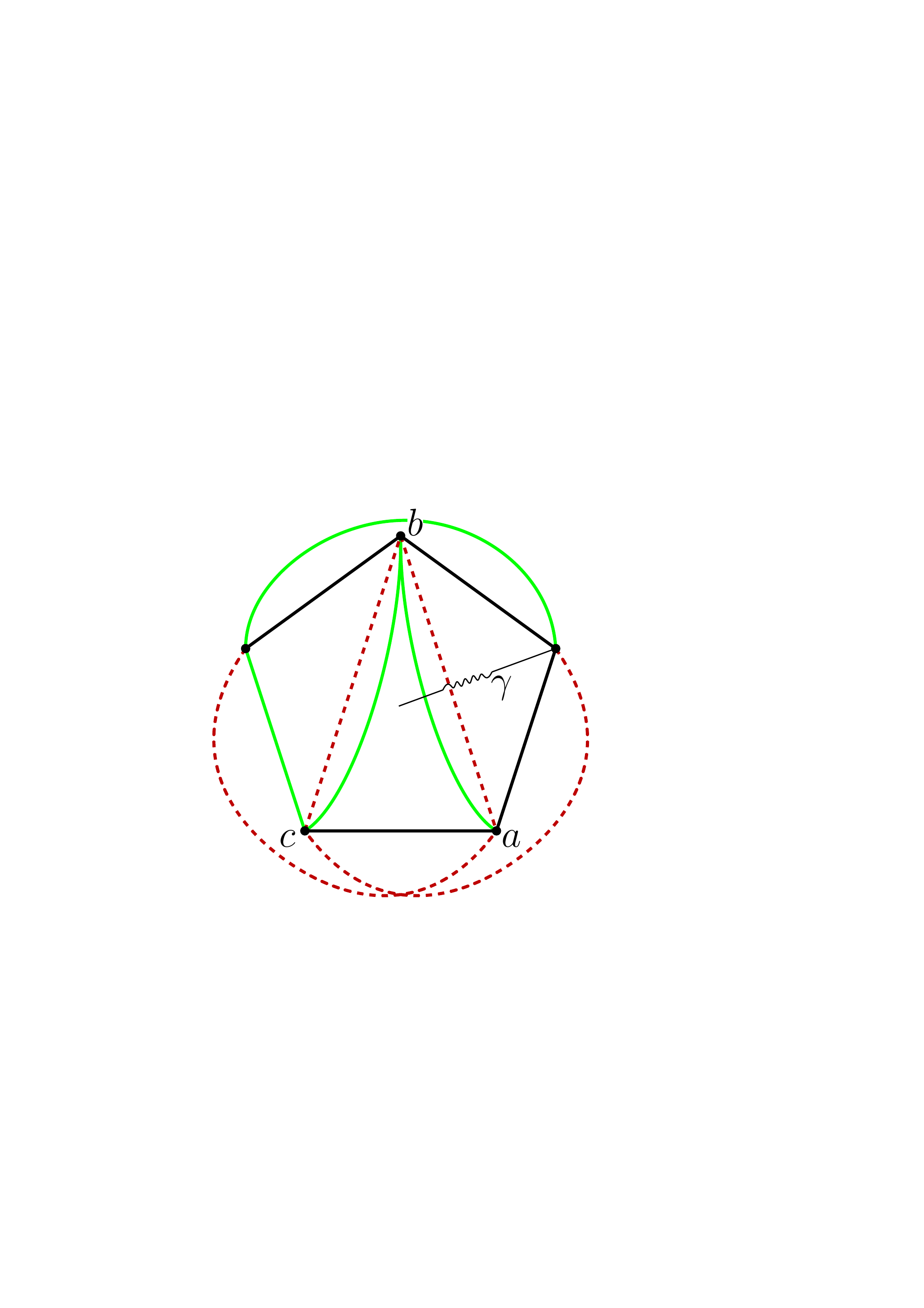}\label{fi:pentagon-a}}
	\hfil
	\subfigure[]{\includegraphics[width=0.25\textwidth, page=2]{pentagon.pdf}\label{fi:pentagon-b}}
	\hfil
	\subfigure[]{\includegraphics[width=0.25\textwidth, page=3]{pentagon.pdf}\label{fi:pentagon-c}}\\
	\hfil
	\subfigure[]{\includegraphics[width=0.25\textwidth, page=2]{5-backbone.pdf}\label{fi:5-backbone-a}}
	\hfil
	\subfigure[]{\includegraphics[width=0.25\textwidth, page=1]{5-backbone.pdf}\label{fi:5-backbone-b}}
	\caption{Illustration for the proof of Lemma~\ref{le:5legged-5}.}\label{fi:pentagon}
\end{figure}
\fivelegged*
\begin{proof}
	If $T$ is a path, then $P_1$, $P_2$ and $T$ are all paths of length five, and by Property~\ref{pr:one} a 1-planar packing of $P_1$, $P_2$ and $T$ does not exist. Suppose therefore that at least one internal vertex of the backbone $P$ of $T$ has some leaves attached. In this case we use an approach similar to the one described in the proof of Lemma~\ref{le:5legged-6ormore}. However, as we have just explained, a 1-planar packing of $P'_1$, $P'_2$ and $P$ cannot exist in this case. We start with a 1-planar packing with two pairs of parallel edges. For each pair, one of the two parallel edges belongs to $P'_1$ and the other one to $P'_2$. We will remove the parallel edges by performing the leaf addition operations. To this aim we must guarantee that there is a cutting curve for each pair of parallel edges. The 1-planar packing $P'_1$, $P'_2$ and $P$ and the cutting curves for the internal vertices of $P$ are shown in Fig.~\ref{fi:5-legged-2-c}, for the case when we have at least two vertices with leaves attached. Indeed, if only two vertices have leaves attached, they are either consecutive along the backbone or not. In the first case, the cutting curves of the vertices labeled $a$ and $b$ will remove the parallel edges; in  the second case, the cutting curves of the vertices labeled $a$ and $c$ will remove the parallel edges.
	
	If only one vertex of $P$ has leaves attached, we have only one cutting curve and thus it is not possible to intersect both pairs of parallel edges. To handle this case we distinguish two cases. If the only vertex with leaves attached is the middle vertex of the backbone, then we can adapt the technique used above as follows. Consider the 1-planar packing of $P'_1$, $P'_2$ and $P$ shown in Fig.~\ref{fi:pentagon-a}, where we have two parallel  edges $(a,b)$ and two parallel edges $(b,c)$. Consider now the cutting curve $\gamma$ shown in Fig.~\ref{fi:pentagon-a}. This curve intersects the two parallel edges $(a,b)$, thus, performing a leaf addition operation using that curve, we obtain a 1-planar packing of $P_1$, $P_2$ and $T$ with the two parallel edges $(b,c)$ (see Fig.~\ref{fi:pentagon-b}). These two parallel edges can be removed by modifying the drawing as follows (see also  Fig.~\ref{fi:pentagon-c} for an illustration). Since the two edges crossed by $\gamma$ are parallel edges, the leaf addition operation used must be one of those shown in Fig.~\ref{fi:gadgets-parallel} for parallel edges. No matter which of the cases applies, one of the two edges incident to vertex $a$ is non-crossed and can be disconnected from $a$ and connected to $c$ without introducing any crossing. Call this edge $e$. The parallel edge $(c,b)$ with the same color as $e$ can be disconnected from $c$ and connected to $a$ only crossing $e$. With this modification we obtain the desired 1-planar packing. If the only vertex with leaves attached is the second (or fourth) vertex of the backbone, we compute a 1-planar packing of $P_1$, $P_2$ and $T$ with an ad-hoc technique shown in Figs.~\ref{fi:5-backbone-a} and~\ref{fi:5-backbone-b} for an even or an odd number of leaves, respectively.
\end{proof}

\twopathscaterpillarfourvertices*
\begin{proof}
	Since the length of the backbone is four, we have exactly two non-leaf vertices $v_1$ and $v_2$. Denote by $n_i$ the number of leaves adjacent to $v_i$ ($i=1,2$) and assume $n_1 \leq n_2$. We distinguish different cases depending on the values of $n_1$ and $n_2$. If $n_1=1$, then we have $\deg_T(v_2)=n-1$ and by Property~\ref{pr:one} a 1-planar packing of $P_1$, $P_2$ and $T$ does not exist. Assume now that $n_1 \geq 2$. 
	
	\begin{figure}[htbp]
		\centering
		\subfigure[even-even]{\includegraphics[width=0.32\textwidth, page=1]{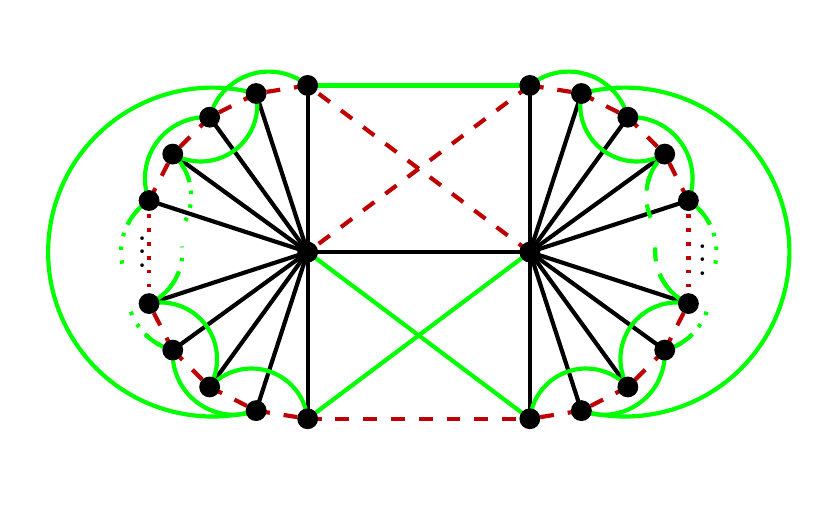}\label{fi:4-backbone-1-a}}
		\hfil
		\subfigure[odd-odd]{\includegraphics[width=0.32\textwidth, page=2]{4-backbone-1.pdf}\label{fi:4-backbone-1-b}}
		\hfil
		\subfigure[even-odd]{\includegraphics[width=0.32\textwidth, page=3]{4-backbone-1.pdf}\label{fi:4-backbone-1-c}}
		\caption{Illustration for the proof of Theorem~\ref{th:caterpillar-4}.}\label{fi:4-backbone-1}
	\end{figure}
	
	We start with the case when $n_1 \geq 5$. In this case we construct a 1-planar packing according to different techniques depending on the parity of $n_1$ and $n_2$. Figs.~\ref{fi:4-backbone-1-a},~\ref{fi:4-backbone-1-b}, and~\ref{fi:4-backbone-1-c} show the construction for the cases when $n_1$ and $n_2$ are both even, when they are both odd, and when they have different parity, respectively. If $n_1 < 5$ we have different ad-hoc constructions that depend on the values of $n_1$ and $n_2$. All cases are shown in Fig.~\ref{fi:4-backbone}. 
\end{proof}

\begin{figure}[htb]
	\centering
	\subfigure[$(2,2)$]{\includegraphics[width=0.15\textwidth, page=1]{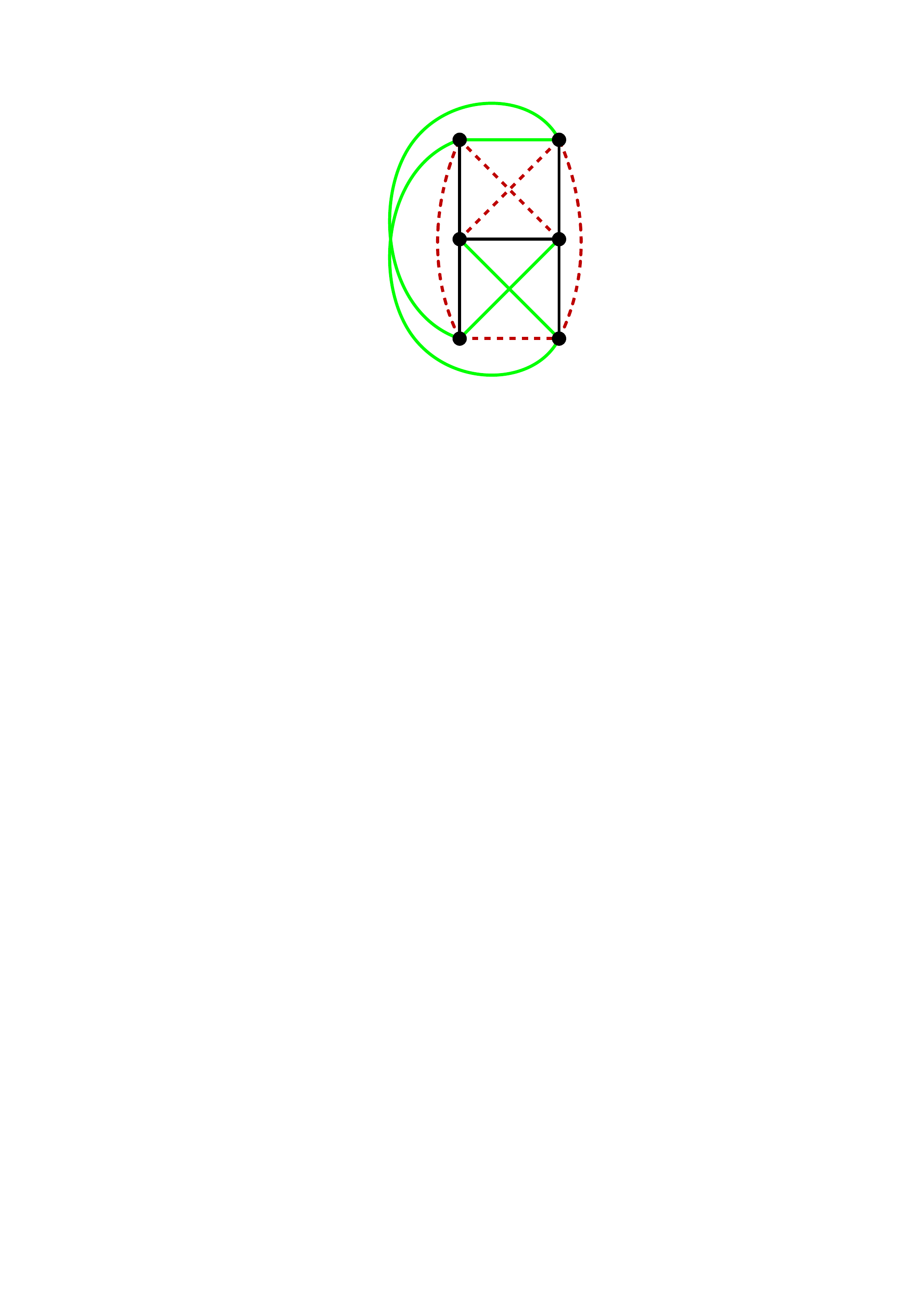}\label{fi:4-backbone-a}}
	\hfil
	\subfigure[$(2,3)$]{\includegraphics[width=0.15\textwidth, page=2]{4-backbone.pdf}\label{fi:4-backbone-b}}
	\hfil
	\subfigure[$(2,4)$]{\includegraphics[width=0.15\textwidth, page=3]{4-backbone.pdf}\label{fi:4-backbone-c}}
	\hfil
	\subfigure[$(2,5^+)$]{\includegraphics[width=0.15\textwidth, page=4]{4-backbone.pdf}\label{fi:4-backbone-d}}
	\hfil
	\subfigure[$(2,6^+)$]{\includegraphics[width=0.15\textwidth, page=5]{4-backbone.pdf}\label{fi:4-backbone-e}}
	\hfil
	\subfigure[$(3,3)$]{\includegraphics[width=0.15\textwidth, page=6]{4-backbone.pdf}\label{fi:4-backbone-f}}
	\hfil
	\subfigure[$(3,4)$]{\includegraphics[width=0.15\textwidth, page=7]{4-backbone.pdf}\label{fi:4-backbone-g}}
	\hfil
	\subfigure[$(3,5^+)$]{\includegraphics[width=0.15\textwidth, page=8]{4-backbone.pdf}\label{fi:4-backbone-h}}
	\hfil
	\subfigure[$(3,6^+)$]{\includegraphics[width=0.15\textwidth, page=9]{4-backbone.pdf}\label{fi:4-backbone-i}}
	\hfil
	\subfigure[$(4,4)$]{\includegraphics[width=0.15\textwidth, page=10]{4-backbone.pdf}\label{fi:4-backbone-j}}
	\hfil
	\subfigure[$(4,5^+)$]{\includegraphics[width=0.15\textwidth, page=11]{4-backbone.pdf}\label{fi:4-backbone-k}}
	\hfil
	\subfigure[$(4,6^+)$]{\includegraphics[width=0.15\textwidth, page=12]{4-backbone.pdf}\label{fi:4-backbone-l}}
	\caption{Illustration for the proof of Theorem~\ref{th:caterpillar-4}. Constructions for the cases when $n_1<5$. For each case the values $(n_1,n_2)$ are indicated; $5^+$ means $n_2 \geq 5$ with $n_2$ odd, while $6^+$ means $n_2 \geq 6$ with $n_2$ even.}\label{fi:4-backbone}
\end{figure}

\section{Additional Material for Section~\ref{se:3-paths}}

\threecycles*
\begin{proof}
	Suppose first that $n \equiv 2 \pmod 3$. In this case, we partition the set $V$ of the $n$ vertices in two groups $V_1$ and $V_2$ of size $7+3k$ and $7+3h$, with $h,k \geq 1$. For each group $V_i$ we compute a 1-planar packing $G_i$ ($i=1,2$) as described in the proof of Theorem~\ref{th:3-paths}. Each $G_i$ has $6$ crossings and it is embedded so that each path has both end-vertices on the external face (each of the three vertices of the external face is the end-vertex of two distinct paths). We create a 1-planar packing of three cycles with $n$ vertices by connecting the two end-vertices of each path in $G_1$ with the two end-vertices of a path in $G_2$. This requires the addition of six edges that can be embedded so to form two crossings (see Fig.~\ref{fi:cycles-a}). Thus, the total number of crossings in the final 1-planar packing is $14$. If $n \equiv 0 \pmod 3$ or $n \equiv 1 \pmod 3$, we proceed in a way similar to the previous case. We create two 1-planar packings $G_1$ and $G_2$ with $7+3k$ and $7+3h$ vertices ($h,k \geq 1$) leaving out one or two vertices. When $G_1$ and $G_2$ are connected to create the 1-planar packing of three cycles we also add the missing one or two vertices as shown in Figs.~\ref{fi:cycles-b} and~\ref{fi:cycles-c}. Also in this case when connecting $G_1$ and $G_2$ we have two additional crossings and a total of $14$ crossings in the final 1-planar packing. 
\end{proof}

\begin{figure}[htb]
	\centering
	\subfigure[]{\includegraphics[width=0.32\textwidth, page=1]{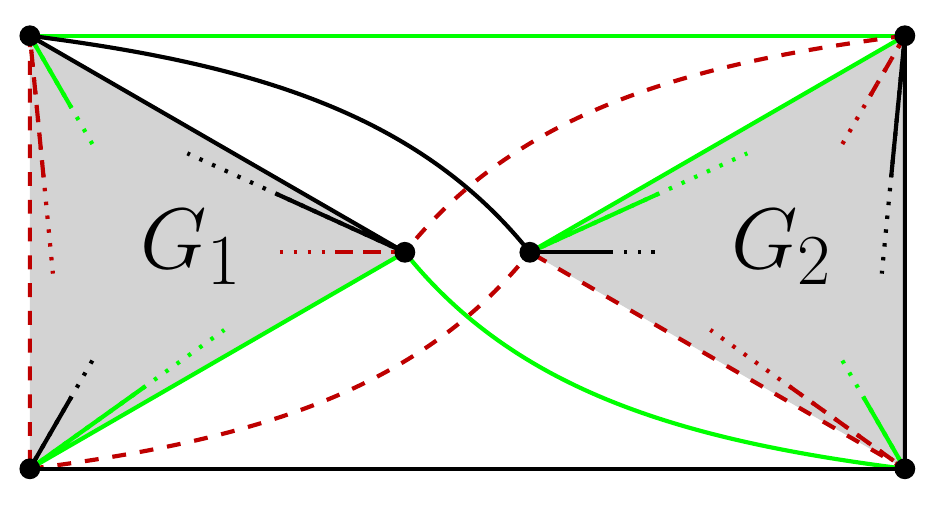}\label{fi:cycles-a}}
	\hfill
	\subfigure[]{\includegraphics[width=0.32\textwidth, page=2]{cycles.pdf}\label{fi:cycles-b}}
	\hfill
	\subfigure[]{\includegraphics[width=0.32\textwidth, page=3]{cycles.pdf}\label{fi:cycles-c}}
	\caption{Illustration for the proof of Theorem~\ref{th:3-cycles}.}\label{fi:cycles}
\end{figure}

%\section{Table of results}
%
%\begin{table}[]
%	\begin{tabular}{|l|l|}
%		\hline
%		Star & Does not exist\\
%		\hline
%		\multicolumn{2}{|c|}{\textbf{Triples}}\\ 
%		\hline
%		3 isomorphic caterpillars $n \geq 10$ & Does not exist\\ \hline
%		2 caterpillars + 1 path $n=7$ & Does not exist\\
%		\hline
%		2 caterpillars + 1 path $n>7$ & Open\\ 
%		\hline
%		1 5-legged caterpillar + 2 paths $n \geq 6$,  $\deg_T(v)\leq n-3$ & Exists\\
%		\hline
%		2 paths + 1 binary tree & Open\\
%		\hline
%		3 paths $n \geq 6$ & Exists\\ 
%		\hline
%		3 cycles $n \geq 20$ & Exists\\ 
%		\hline
%		\multicolumn{2}{|c|}{\textbf{Quadruples}}\\
%		\hline
%		3 paths + 1 perfect matching $n \geq 24$ & Exists\\ 
%		\hline
%		3 paths + 1 perfect matching $n \leq 10$ & Does not exist\\ 
%		\hline
%		3 paths + 1 perfect matching $10 < n \leq 22$ & Open\\
%		\hline
%	\end{tabular}
%\end{table}

\newpage 
\section{Additional Material for Section~\ref{se:quadruples}}\label{ap:quadruples}

\begin{figure}[htb]
	\centering
	\subfigure[$n=12$]{\includegraphics[width=0.32\textwidth, page=7]{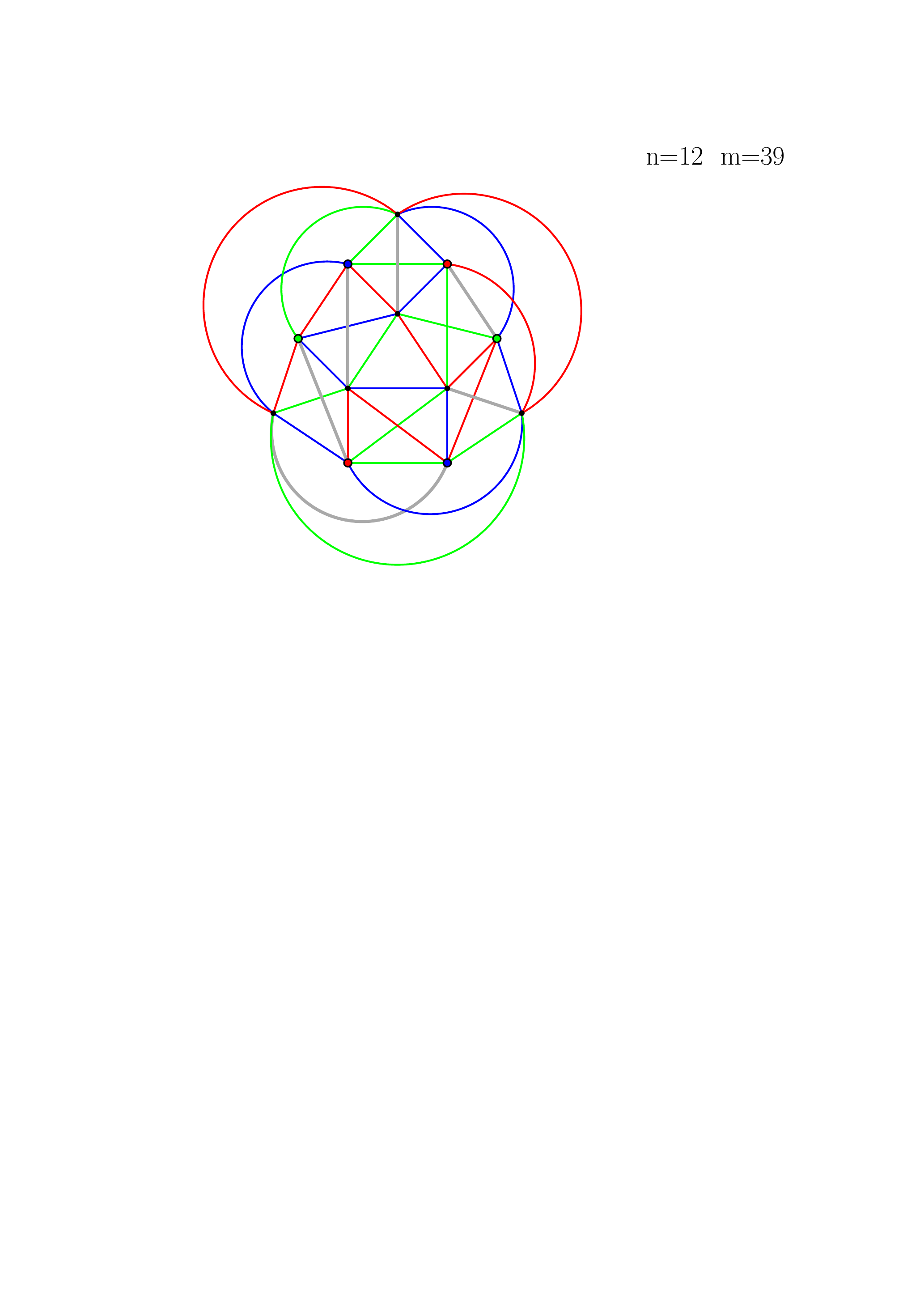}\label{fi:paths-matching-a}}
	\hfill
	\subfigure[$n=14$]{\includegraphics[width=0.32\textwidth, page=8]{paths-matching-12-22.pdf}\label{fi:paths-matching-b}}
	\hfill
	\subfigure[$n=16$]{\includegraphics[width=0.32\textwidth, page=9]{paths-matching-12-22.pdf}\label{fi:paths-matching-c}}
	\hfill
	\subfigure[$n=18$]{\includegraphics[width=0.32\textwidth, page=10]{paths-matching-12-22.pdf}\label{fi:paths-matching-d}}
	\hfill
	\subfigure[$n=20$]{\includegraphics[width=0.32\textwidth, page=11]{paths-matching-12-22.pdf}\label{fi:paths-matching-e}}
	\hfill
	\subfigure[$n=22$]{\includegraphics[width=0.32\textwidth, page=12]{paths-matching-12-22.pdf}\label{fi:cycles-f}}
	\caption{1-planar packing of three paths and a perfect matching.}\label{fi:paths-matching}
\end{figure}

\end{document}